\newtheorem{defin}{Definition}[section]
\newtheorem{theorem}{Theorem}[section]
\newtheorem{remark}{Remark}[section]
\newcommand{\N}{\mathbb{N}}
\newcommand{\X}{\times}
\newcommand\D{\partial}
\newcommand{\spann}{\mathop\mathrm{span}\nolimits}
  \let\div\undefined\DeclareMathOperator{\div}{div}
  \let\curl\undefined\DeclareMathOperator{\curl}{curl}
\numberwithin{equation}{section}
\begin{document}
\title[Quasicrystal dynamics with non-linear gyroscopic effects]{Existence 
results in the linear dynamics of quasicrystals with phason diffusion and non-linear gyroscopic effects}
\author{Luca Bisconti*, Paolo Maria Mariano**}
\address{*DiMaI ``U. Dini'', Universit\`a di Firenze, V.le Morgagni 67/A, I-50134 Firenze}
\email{luca.bisconti@unifi.it}
\address{**DiCeA,  Universit\`a di Firenze, V. Santa Marta 3, I-50139 Firenze}
\email{paolo.mariano@unifi.it}
\date{\today}

\begin{abstract}
Quasicrystals are characterized by quasi-periodic arrangements of atoms. The description of their mechanics involves deformation and a (so-called \emph{phason}) vector field accounting at macroscopic scale of local phase changes, due to atomic flips necessary to match quasi-periodicity under the action of the external environment. Here we discuss the mechanics of quasicrystals, commenting the shift from its initial formulation, as standard elasticity in a space with dimension twice the ambient one, to a more elaborated setting avoiding physical inconveniences of the original proposal. In the new setting we tackle two problems. First we discuss the linear dynamics of quasicrystals including a phason diffusion. We prove existence of weak solutions and their uniqueness under rather general boundary and initial conditions. We then consider phason rotational inertia, non-linearly coupled with the $\mathrm{curl}$ of the macroscopic velocity, and prove once again existence of weak solutions to the pertinent balance equations.
\end{abstract}

\maketitle

\section{Introduction} \label{sec:introduction}
\subsection{Quasi-periodic atomic arrangements}

In a 1991 report of the International Union of Crystallography, under
\textquotedblleft terms of reference\textquotedblright , we
find that \textquotedblleft by `crystal' we mean any solid having an
essentially discrete diffraction diagram, and by `aperiodic crystal' we mean
any crystal in which three-dimensional lattice periodicity can be considered
to be absent\textquotedblright\ \cite{ICU}. The definition includes then the
possibility of quasi-periodic crystals, also called \emph{quasicrystals} to
remind the circumstance.

Such a viewpoint overcame in the definition of crystals the previous mention of periodicity of the 
atomic distribution in space: a sliding in the
basic paradigm of crystallography induced by the 1982 experimental discovery
by D. Shechtman, published in 1984 \cite{She}, of the possibility of atomic
arrangements with icosahedral symmetry in aluminium-based synthetic alloys,
and penthagonal symmetry in thin films of the same materials, not determined by twinned atomic structures. Natural
quasi-periodic alloys have been found since then in meteorites.

A typical example of a quasiperiodic function over the line is $\sin x+\cos
\alpha x$, with $\alpha $ an irrational number. It can be intended as
obtained from a periodic function in the plane, namely $\sin x+\cos y$, with
the additional constraint $y=\alpha x$. Moreover, if we consider a
quasi-periodic distribution of mass points in $3D$ space and compute the
Fourier expansion of the mass distribution, we find that the emerging wave
vector is \emph{six}-dimensional. In general, a quasi-periodic atomic
arrangement can be viewed as the orthogonal projection of a portion of a
periodic lattice onto an appropriate (incommensurate) subspace. The
recurrent example includes a periodic lattice with square symmetry, filling a plane. If
we select a strip around a straight line, inclined about $3/2$ with respect
to the symmetry axes, and we project orthogonally the atoms in the strip
over the line, we get a periodic one-dimensional lattice. In contrast, if we
choose a straight line inclined by an irrational angle and reproduce the
same process, what we find is a quasi-periodic one-dimensional lattice.
Quasi-periodicity emerges even in the golden mean case (i.e. $3/2$) if we
move orthogonally to the line the lattice in an appropriate way: some atoms
go out the strip, others enter it. Such orthogonal shifts take the name of 
\textbf{phason} defects. The common terminology seems to recall that, by
means of the orthogonal displacement of the lattice, we are changing the
phase (at least with respect to the symmetry) of the lattice over the line.
Such a construction, however, is just an ideal geometric representation of
what is in nature. In the physical space, we can consider phasons as \emph{inner} (low
spatial scale) degrees of freedom exploited locally by the atoms to assure
quasi-periodicity, compatibly with the boundary conditions imposed to a
quasi-crystalline body by the interaction with the external environment. As
inner degrees of freedom, they are invariant with respect to rigid
translations of the whole body. This aspect has a key role in the
developments presented in what follows.

\subsection{Origins of and trends in the mechanics of quasicrystals}

In building up continuum models of the mechanics of quasicrystals, the
geometric constructions above have suggested at a first glance of proposing
just a higher dimensional replica of crystal elasticity traditional setting (see, e.g., for
the first proposals \cite{Lubensky}, \cite{DP} and also \cite{Jeong}, \cite{RL}, \cite{HWD}). Researchers have then focused the attention
primarily on linear elasticity, meeting this way the concrete advantage of
having at disposal a format where we can easily reproduce sistematically,
without thecnical and conceptual difficulties, all the standard results of
traditional linear elasticity. The approach may lead to conclusions
with peculiar physical significance (a review of the production in this trend
is in \cite{Fan}; up-dated results are in \cite{Li}, \cite{Li-Yun14}, \cite{LWWC14}, \cite{Lian-He13}, \cite{RM11b}). However, in 2011, S.
Colli and P. M. Mariano \cite{CM11} showed the existence of at least two
cases where such a format of quasicrystal linear elasticity produces
non-physical results, i.e. the instantaneous propagation at infinity of the
phason disturbances, described at a continuum level by a differentiable
vector field (this way the representation is multi-scale) and they
conjectured that a non-zero conservative phason self-action would avoid such a drawback,
assuring phason decay in space. The analytical proof of the conjecture for
one-dimensional quasicrystals is in \cite{MP13}. Numerical
simulations corroborating the conjecture in two-dimensional space appeared
later in \cite{MS15}.

\begin{itemize}
\item In fact, a dissipative phason self-action has been already \emph{assumed} to
exist in \cite{RL} but just with dissipative nature, an action presumed to
drive phason diffusion. The problems evidenced in \cite{CM11} were, however,
in purely conservative static setting.

\item A first proof of the existence of a possibly nonzero phason
self-action with both conservative and dissipative nature appeared
first in \cite{M06}. Then it was rediscussed in \cite{MP13}, where an ancillary
consequence shows the theoretical possibility of a
rotational-type phason inertia induced by the local spin of the macroscopic
velocity field, an aspect taken into account here.

\end{itemize}

\subsection{What we discuss}

Here we focus the attention on the dynamics of quasicrystals in small-strain
regime, including a phason self-action with both conservative and dissipative components, and gyroscopic type phason inertia effects. Before tackling the analysis of the pertinent balance equations, in \textbf{Section 2} we rediscuss preliminarly their
deduction from
the invariance under rigid-body changes in observers of the power of
external actions over a generic part of the body. We briefly reproduce the
path followed in \cite{MP13} by showing in addition a non-standard
action-reaction principle for the phason traction and the existence of the
phason stress (Cauchy-type theorem), which emerge as special occurrences of
abstract results in the general model-building framework of the mechanics of
complex materials, presented in \cite{M14}. In contrast with what is
developed in \cite{MP13}, we restrict the treatment just to Euclidean
frames, identifying covariant and contravariant components of the tensors
considered, because the special cases tackled analytically in the subsequent
sections refer to that frames.

In \textbf{Section 3},  we consider in small strain regime (1)
linear constitutive structures for the Cauchy stress, the phason stress, and
the conservative component of the phason self-action, (2) phason diffusion driven
by a dissipative phason self-action, (3) macroscopic inertia. We provide existence and uniqueness theorems for the weak solution of the balance
equations. Then, in \textbf{Section 4}, we consider non-linear rotational-type phason
inertia and we provide a theorem of existence (and regularity) of the
weak solutions of these modified balance equations. In the treatment we consider
first regularization induced by viscous-type standard and phason stresses.
Then we compute the limit when such regularizations vanish. The results in
the linear case are a necessary prerequisite for the non-linear one.

\section{Continuum mechanics of quasicrystals}

\subsection{Deformation and phason field}

We write $\mathcal{B}$ for the macroscopic \textbf{reference} shape  of a quasicrystalline body (it is just a
geometric setting where we may compare lenghts, angles, surfaces and volumes
to measure strain), assumed to be a bounded
arcwise connected open region in the three-dimensional point space $\mathcal{E%
}^{3}$, coinciding with the interior of its closure and endowed with
surface-like boundary uniquely oriented everywhere to within a finite number of corners
and edges. In another space, indicated by $\mathcal{\tilde{E}}^{3}$ and
distinguished by $\mathcal{E}^{3}$ just by an isomorphism $i:\mathcal{E}%
^{3}\longrightarrow \mathcal{\tilde{E}}^{3}$, which we can choose as an
orientation preserving isometry or even the simple identification, we record
shapes of the body that we consider \textbf{deformed} with respect to $%
\mathcal{B}$, reached by means of one-to-one, differentiable, orientation
preserving maps $x\longmapsto y:=\tilde{y}\left( x\right) \in \mathcal{%
\tilde{E}}^{3}$.

The distinction between the two spaces justifies the standard statement
that two observers, i.e. two frames in the \emph{whole} space, differing one
another by a rigid-body motion, evaluate the \emph{same} reference place.
Moreover, such a distinction is crucial when we want to consider material
mutations, which are naturally described by a non-unique choice of the
reference place (see \cite{M14}).

A field taking values in a three-dimensional real vector space $\mathcal{V}%
^{3}$, precisely $x\longmapsto \nu :=\tilde{\nu}\left( x\right) \in \mathcal{%
V}^{3}$, assumed to be differentiable, accounts point-by-point at the
continuum scale for the atomic flips, which allow to match
quasi-periodicity. This is the so-called \textbf{phason field} in Lagrangian
representation, i.e. considered as a field over the reference place. We then
call $\mathcal{V}^{3}$ the \textbf{phason space}.

From now on we endow $\tilde{\mathcal{E}}^{3}$, $\mathcal{E}^{3}$ and 
$\mathcal{V}^{3}$ with Cartesian frames.

\textbf{Motions} are then (in generalized sense) pairs%
\begin{equation*}
\left( x,t\right) \longmapsto y:=\tilde{y}\left( x,t\right) \in \mathcal{%
\tilde{E}}^{3},\text{ \ \ }\left( x,t\right) \longmapsto \nu :=\tilde{\nu}%
\left( x,t\right) \in \mathcal{V}^{3},
\end{equation*}%
assumed to be sufficiently differentiable in time.

We shall write $F$ and $N$ for the \textbf{deformation gradient} and the 
\textbf{phason field gradient}, evaluated at $x$ and $t$. The assumption
that the deformation preserves the local orientation of triples of linearly
independent vectors implies $\det F>0$, a standard consequence, indeed. We
define another vector field, the \textbf{displacement}, as%
\begin{equation*}
\left( x,t\right) \longmapsto u:=\tilde{u}\left( x,t\right) :=\tilde{y}%
\left( x,t\right) -i\left( x\right) .
\end{equation*}%
Consequently, we have $\nabla u:=\nabla \tilde{u}\left( x,t\right) =F+I$,
where $I$ is the second-rank unit tensor.

As a matter of notation, we shall write $u_{t}$, $u_{tt}$, and $\nu _{t}$
for the values $\dot{y}:=\frac{d\tilde{y}\left( x,t\right) }{dt}$, $\ddot{y}%
:=\frac{d^{2}\tilde{y}\left( x,t\right) }{dt^{2}}$, and $\dot{\nu}:=\frac{d%
\tilde{\nu}\left( x,t\right) }{dt}$, respectively, the latter chosen for the
sake of notational uniformity. The velocity in the physical space
and the phason time rate just listed are expressed in Lagrangian representation,
i.e. as fields over the reference place and the time scale.
We can have an Eulerian representation of such fields, i.e. we can consider
them defined over the actual shape $\mathcal{B}_{a}=\tilde{y}(\mathcal{B},t)$.
In this case we write $\mathrm{v}(y,t)$ and $\upsilon(y,t)$. Since at $x$ and $t$,
the vector $\dot{y}(x,t)$ is tangent to $\mathcal{B}_{a}$ at the point $y$,
we have the standard identity
\begin{equation*}
\dot{y}(x,t)=\mathrm{v}(y,t).
\end{equation*}
An analogous relation \emph{does not hold} between $\upsilon(y,t)$
and $\dot{\nu}(x,t)$. The lack of identity depends on the circumstance 
that $\upsilon$ is the time rate of the Eulerian representation of the
phason field, which is a map $\tilde{\nu}_{a}$ defined by
\begin{equation*}
\tilde{\nu}_{a}:=\tilde{\nu}\circ\tilde{y}^{-1},
\end{equation*}
a definition possible for $\tilde{y}$ is one-to-one. The subscript $a$ means \emph{actual}, i.e. \emph{referred to the deformed configuration}, here and in what follows.

The condition
\begin{equation*}
|\nabla u|\ll 1
\end{equation*}
defines the \textbf{small strain regime}, in which we develop the analyses
presented in Sections 3 and 4. 
In this setting we can \textquoteleft confuse' $\mathcal{B}$ with $\mathcal{B}_{a}$, 
$u_{t}$ with $\mathrm{v}$, $\nu$ with $\nu_{a}:=\tilde{\nu_{a}}(y,t)$.

\subsection{Changes in observers}

According to the definition proposed explicitly in \cite{M06} and further
refined in \cite{M14}, we define \textbf{observer} \emph{frames of reference
assigned on all spaces necessary to describe the shape of a body and its
motion}. In the setting discussed here, an observer is then (1) a frame in $%
\mathcal{\tilde{E}}^{3}$ or--it is the same--in the pertinent translation
space containing $u$, a space identified with $\mathbb{R}^{3}$, once we fix
an origin, (2) a frame in $\mathcal{E}^{3}$, (3) a frame in $%
\mathcal{V}^{3}$--all identified with copies of $\mathbb{R}^{3}$, which we
can consider differing one another just by the identification--and (4) a
time scale.

We consider time-varying synchronous changes in observers leaving invariant
the reference space and changing the frame $\mathcal{\tilde{E}}^{3}$ by a
rigid body motion. Precisely, let us write $\mathcal{O}$ and $\mathcal{O}%
^{\prime }$ for these two observers. A place $y$ for $\mathcal{O}$ becomes $%
y^{\prime }$ for $\mathcal{O}^{\prime }$, with%
\begin{equation*}
y^{\prime }:=w\left( t\right) +Q\left( t\right) \left( y-y_{0}\right) ,
\end{equation*}%
where $w\left( t\right) $ and $Q\left( t\right) $ are the values at $t$ of
time-differentiable maps $t\longmapsto w\left( t\right) \in \mathbb{R}^{3}$, 
$t\longmapsto Q\left( t\right) \in SO\left( 3\right) $, with $t$ running in
the selected time interval, and $y_{0}$ an arbitrary point in space. The
time rates are then $\dot{y}$ for the first observer and $\dot{y}^{\prime }=%
\dot{w}+\dot{Q}\left( y-y_{0}\right) +\dot{y}$ for the second. By rotating
back by $Q^{-1}=Q^{T}$ the rate $\dot{y}^{\prime }$ into the frame defining $%
\mathcal{O}$, and indicating by $\dot{y}^{\ast }$ the rotated velocity,
which is $Q^{T}\dot{y}^{\prime }$, we get%
\begin{equation*}
\dot{y}^{\ast }:=c+q\times \left( y-y_{0}\right) +\dot{y},
\end{equation*}%
where $c:=Q^{T}w$, and $q$ is the axial vector of the skew-symmetric tensor $%
Q^{T}\dot{Q}$, both depending on time only. Since $\dot{y}=u_{t}$, for the
displacement rate, under the change in observer considered here, we find%
\begin{equation*}
u_{t}^{\ast }=c\left( t\right) +q\left( t\right) \times \left(
y-y_{0}\right) +u_{t}.
\end{equation*}%
Since $\dot{y}(x,t)=v(y,t)$, we can also write
\begin{equation*}
\mathrm{v}^{\ast }:=c\left( t\right) +q\left( t\right) \times \left(
y-y_{0}\right) +\mathrm{v}.
\end{equation*}
The distinction between the ambient space $\mathcal{\tilde{E}}^{3}$ and the
phason one $\mathcal{V}^{3}$ is just matter of modeling. Atomic flips,
determining phason defects, occur in the physical space, indeed. We have
also to remind that the notion of observer is just a formal representation
of the concrete action of recording a phenomenon. When we rotate an observer
in space we should perceive rotated the atomic flips. They are not affected
by rigid translations in space for they are internal degrees of freedom.
Consequently, with $\nu $ the value of the phason field for $\mathcal{O}$, the
observer $\mathcal{O}^{\prime }$ records a value $\nu ^{\prime }=Q\nu $. The
relevant rates are then $\dot{\nu}$ and $\dot{\nu}^{\prime }=Q\dot{\nu}+\dot{%
Q}\nu $ respectively. By writing $\dot{\nu}^{\ast }=\nu _{t}^{\ast }$ for $%
Q^{T}\dot{\nu}^{\prime }$, we get%
\begin{equation*}
\dot{\nu}^{\ast }=\dot{\nu}+q\times \nu .
\end{equation*}

\subsection{External power, invariance and balance}
We have already mentioned in the Introduction that we derive
balance equations from the invariance of power over a generic 
\emph{part} of the body. The word \textbf{part} indicates here a
subset $\mathfrak{b}$
of $\mathcal{B}$ with non-null volume measure and the same regularity 
of $\mathcal{B}$ itself or a subset $\mathfrak{b}_{a}$ of 
the current macroscopic shape $\mathcal{B}_{a}=\tilde{y}(\mathcal{B},t)$ 
of the body. Given a generic $\mathfrak{b}_{a}$, we divide as usual all
actions exerted on $\mathfrak{b}_{a}$ by the environment and the rest of 
the body into bulk and contact families, the latter intended to be exerted 
through the boundary of the part considered. Each family is also subdivided
into standard and phason components, all defined by the expression
of the power that the external action must perform over $\mathfrak{b}_{a}$
to change its state of motion with velocity $\mathrm{v}$ in the 
physical space and phason rate $\upsilon$. For this reason
we call such a power \textbf{external}, indicating it by $\mathcal{P}_{\mathfrak{b}}^{ext}$
and defining it in Eulerian representation by
\begin{equation}
\mathcal{P}_{\mathfrak{b}_{a}}^{ext}\left( \mathrm{v},\upsilon\right) :=\int_{%
\mathfrak{b}_{a}}\left( b^{\ddagger }_{a}\cdot \mathrm{v}+\beta ^{\ddagger }_{a}\cdot \upsilon%
\right) \text{ }d\mu(y)+\int_{\partial \mathfrak{b}_{a}}\left( \mathfrak{t}\cdot 
\mathrm{v}+\tau \cdot \upsilon\right) \text{ }d\mathcal{H}^{2},  \label{ExtPow}
\end{equation}%
were $d\mathcal{H}^{2}$ is the surface measure along $\partial \mathfrak{b}_{a}$ and 
$d\mu(y)$ is the volume measure in $\mathcal{B}_{a}$.

At $y \in \partial\mathfrak{b}_{a}$, where $\partial\mathfrak{b}_{a}$ is 
oriented by the normal $n$, the standard traction $\mathfrak{t}$
depends on $y$ itself and $n$, besides the time $t$ (Cauchy's assumption and 
Hamel-Noll theorem). Here we assume the same
dependence for the phason traction $\tau$, i.e. we impose
\begin{equation*}
\tau:=\tilde{\tau}(y,n),
\end{equation*}
in addition to
\begin{equation*}
\mathfrak{t}:=\tilde{\mathfrak{t}}(y,n),
\end{equation*}
where we leave unexpressed the dependence on time for the sake of conciseness of some formulas below.

What we impose to $\mathcal{P}_{\mathfrak{b}_{a}}^{ext}\left( \mathrm{v},\upsilon\right)$ 
is an axiom of invariance.

\ \

\textbf{Axiom}: $\mathcal{P}_{\mathfrak{b}_{a}}^{ext}\left( \mathrm{v},\upsilon\right)$ 
is invariant under rigid-body-based
changes in observers, i.e.
\begin{equation*}
\mathcal{P}_{\mathfrak{b}_{a}}^{ext}\left( \mathrm{v}^{\ast },\upsilon^{\ast }\right)=\mathcal{P}_{\mathfrak{b}_{a}}^{ext}\left( \mathrm{v},\upsilon\right)
\end{equation*}
for any choice of $c$ and $q$.

\ \

\begin{theorem}
The axiom of invariance implies the following list of assertions:
\begin{itemize}
\item[(a)] If the fields $y\longmapsto b^{\ddagger }_{a}$, $y\longmapsto \nu\times\beta^{\ddagger } $, $y \longmapsto \mathfrak{t}$ and $y\longmapsto \tau$ are integrable over $\mathcal{B}_{a}$, the following integral balances hold for \textbf{any} part $\mathfrak{b}_{a}$ of $\mathcal{B}_{a}$ and for $\mathcal{B}_{a}$
itself:
\begin{equation}
\int_{\mathfrak{b}_{a}}b_{a}^{\ddagger }\text{ }d\mu(y)+\int_{\partial \mathfrak{b}_{a}}%
\mathfrak{t}\text{ }d\mathcal{H}^{2}=0,  \label{IBF}
\end{equation}%
\begin{equation}
\int_{\mathfrak{b}_{a}}\left( \left( y-y_{0}\right) \times b_{a}^{\ddagger }+\nu
\times \beta_{a} ^{\ddagger }\right) \text{ }d\mu(y)+\int_{\partial \mathfrak{b}_{a}%
}\left( \left( y-y_{0}\right) \times \mathfrak{t}+\nu \times \tau \right) 
\text{ }d\mathcal{H}^{2}=0.  \label{IBC}
\end{equation}
\item[(b)] If the standard traction is continuous with respect to $y$ and the standard bulk action is bounded over $\mathcal{B}_{a}$ at every instant, $\mathfrak{t}$ satisfies the action-reaction principle
\begin{equation}
\mathfrak{t}(y,n)=-\mathfrak{t}(y,-n). \label{a-r}
\end{equation}
\item[(c)] In the same continuity conditions, a second-rank tensor $\sigma$ independent of $n$ exists and is
such that 
\begin{equation}
\mathfrak{t}(y,n)=\sigma(y)n(y). \label{Cauchy}
\end{equation}
\item[(d)] If the phason traction is continuous with respect to $y$ and the field $y\longmapsto \nu\times\beta^{\ddagger}$ is bounded over $\mathcal{B}_{a}$ at every instant, $\tau$ satisfies a non-standard action-reaction principle
\begin{equation}
\nu_{a}(y) \times(\tau(y,n)-\tau(y,-n)=0. \label{a-r-phason}
\end{equation}
\item[(e)] In the same regularity conditions above, a second-rank
tensor $\mathcal{S}_{a}$ independent of $n$ exists and is such that
\begin{equation}
\tau(y,n)=\mathcal{S}_{a}(y)n(y), \label{phason-Cauchy}
\end{equation}
a tensor that we call \textbf{phason stress}.
\item[(f)] If the field $y \longmapsto \sigma(y)$ is  $C^{1}$ over $\mathcal{B}_{a}$ and just continuous over its boundary, equation
(\ref{IBF}) implies the validity of the standard pointwise balance of forces
\begin{equation}
b_{a}^{\ddagger }+\mathrm{div}\sigma=0. \label{pointCau}
\end{equation}
\item[(g)] If, in addition, the field $y \longmapsto \mathcal{S}_{a}(y)$ is  $C^{1}$ over $\mathcal{B}_{a}$ and just continuous over its boundary, equation (\ref{IBC}) implies the existence of a vector $z_{a}$ such that
\begin{equation}
\mathrm{div}\mathcal{S}_{a}+\beta_{a} ^{\ddag }-z_{a}=0  \label{f2}
\end{equation}
and
\begin{equation}
\mathrm{Skw}\sigma =\frac{1}{2}\mathrm{e}(\nu _{a}\times z+(\mathrm{e}\nabla_{y}\left( 
\nu _{a}\right) )^{T }\mathcal{S}_{a}),  \label{f3}
\end{equation}
where the apex T means transposition, $\mathrm{e}$ is Ricci's alternating symbol, and $\nabla_{y}$ is the gradient with respect to $y$.
\item[(h)] The external power satisfies the following relation:
\begin{equation}
\mathcal{P}_{\mathfrak{b}_{a}}^{ext}\left( \mathrm{v},\upsilon\right)=\int_{\mathfrak{b}_{a}}(\sigma\cdot\nabla_{y}\mathrm{v}+z_{a}\cdot\upsilon+\mathcal{S}_{a}\cdot\nabla_{y}\upsilon)\: d\mu(y), \label{plv}
\end{equation}
for any choice of the rates involved. The right-hand side term takes the name of \textbf{inner power}.
\end{itemize}
\end{theorem}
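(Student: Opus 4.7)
The plan is to translate the invariance axiom into algebraic conditions by substituting the explicit forms $\mathrm{v}^{\ast}=c+q\times(y-y_{0})+\mathrm{v}$ and $\upsilon^{\ast}=\upsilon+q\times\nu$ into $\mathcal{P}^{ext}_{\mathfrak{b}_{a}}(\mathrm{v}^{\ast},\upsilon^{\ast})=\mathcal{P}^{ext}_{\mathfrak{b}_{a}}(\mathrm{v},\upsilon)$ and exploiting arbitrariness of $c,q\in\mathbb{R}^{3}$. Setting $q=0$ and varying $c$ produces the integral force balance (\ref{IBF}). Setting $c=0$ and rearranging cross products via $a\cdot(q\times b)=q\cdot(b\times a)$ converts the remainder into $q\cdot\{\cdots\}=0$; arbitrariness of $q$ yields the integral moment balance (\ref{IBC}). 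This proves (a).

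For (b)-(e) I would invoke classical Cauchy-type arguments. A pillbox of vanishing thickness applied to (\ref{IBF}), using boundedness of $b^{\ddagger}_{a}$ and continuity of $\mathfrak{t}$ so that the bulk integral and the side-wall contributions vanish, forces $\mathfrak{t}(y,n)+\mathfrak{t}(y,-n)=0$, i.e. (\ref{a-r}); the Cauchy tetrahedron argument then yields linearity in $n$ and hence (\ref{Cauchy}). The same pillbox applied to (\ref{IBC}) has the $(y-y_{0})\times\mathfrak{t}$ contributions on opposite faces collapse by (\ref{a-r}), so under continuity of $\nu$ only the phason term survives, giving (\ref{a-r-phason}). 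A Cauchy tetrahedron argument applied to the phason part of (\ref{IBC}), using (\ref{Cauchy}) to discharge the standard-traction moments, then yields linearity of $\tau$ in $n$ modulo vectors parallel to $\nu_{a}$; choosing a representative provides the phason stress tensor $\mathcal{S}_{a}$ of (\ref{phason-Cauchy}).

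Part (f) is routine localization: inserting (\ref{Cauchy}) into (\ref{IBF}), applying the divergence theorem, and invoking arbitrariness of $\mathfrak{b}_{a}$ gives (\ref{pointCau}). For (g), substitute (\ref{Cauchy}) and (\ref{phason-Cauchy}) into (\ref{IBC}), apply the divergence theorem, and use (\ref{pointCau}) together with the identity $\mathrm{div}((y-y_{0})\times\sigma)=(y-y_{0})\times\mathrm{div}\sigma-2\,\mathrm{ax}(\mathrm{Skw}\sigma)$ to cancel the force-moment contributions. The surviving pointwise identity has the form $\nu_{a}\times(\mathrm{div}\mathcal{S}_{a}+\beta^{\ddagger}_{a})+2\,\mathrm{ax}(\mathrm{Skw}\sigma)-\mathrm{e}\bigl((\mathrm{e}\nabla_{y}\nu_{a})^{T}\mathcal{S}_{a}\bigr)=0$; defining $z_{a}$ by (\ref{f2}) absorbs the first term as $\nu_{a}\times z_{a}$ and rearranges the rest into (\ref{f3}). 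Finally (h) follows by substituting $\mathfrak{t}=\sigma n$, $\tau=\mathcal{S}_{a}n$ into the boundary integral in (\ref{ExtPow}), applying the divergence theorem, and using (\ref{pointCau}) and (\ref{f2}) to eliminate the bulk $b_{a}^{\ddagger}\cdot\mathrm{v}$ and $(z_{a}-\beta^{\ddagger}_{a})\cdot\upsilon$ terms.

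The main technical obstacle sits at (d)-(e): because the phason action-reaction survives only after taking the cross product with $\nu_{a}$, the phason traction is determined as a linear function of $n$ only modulo vectors parallel to $\nu_{a}$, so extracting a bona fide second-rank tensor $\mathcal{S}_{a}$ requires a careful selection of representative. Ensuring that this selection is consistent with the moment-balance rearrangement leading to (\ref{f2})-(\ref{f3}) in part (g) — in particular so that the definition of $z_{a}$ is compatible with the antisymmetric part identity for $\sigma$ — is the real conceptual heart of the proof; the remaining steps are standard continuum-mechanics manipulations.
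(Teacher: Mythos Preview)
Your proposal is correct and follows the same overall strategy as the paper: exploit arbitrariness of $c$ and $q$ to obtain (\ref{IBF})--(\ref{IBC}), then run Cauchy-type arguments and localize. The one noteworthy difference is in the handling of (d)--(e). You apply the pillbox and tetrahedron arguments directly to the phason contribution in (\ref{IBC}), using the already-established relations (\ref{a-r}) and (\ref{Cauchy}) to cancel the standard-traction terms face by face. The paper instead bundles the integrand of (\ref{IBC}) into auxiliary fields $r:=(y-y_{0})\times b_{a}^{\ddagger}+\nu\times\beta_{a}^{\ddagger}$ and $p:=(y-y_{0})\times\mathfrak{t}+\nu\times\tau$, applies the \emph{standard} Cauchy machinery to the pair $(r,p)$ to obtain $p(y,n)=A(y)n$, and only then subtracts the known linear-in-$n$ part $(y-y_{0})\times\sigma n$ to conclude that $\nu\times\tau$ is linear in $n$. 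The paper's bundling avoids re-running the pillbox argument in a modified setting; your direct route is perhaps more transparent about where each hypothesis is used. You also explicitly flag the indeterminacy of $\tau$ (and hence $\mathcal{S}_{a}$) up to vectors parallel to $\nu_{a}$, a point the paper passes over in silence; this is a genuine subtlety, and your caution about choosing a representative consistent with (\ref{f2})--(\ref{f3}) is well placed.
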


\begin{proof}
The first item follows trivially by the arbitrariness of $c$ and $q$ for the
axiom implies
\begin{equation}
\mathcal{P}_{\mathfrak{b}_{a}}^{ext}\left( c+q\times (y-y_{0}),q\times \nu\right)=0.
\end{equation}
The first integral balance (\ref{IBF}) implies the boundedness of the
absolute value of the 
traction average over the boundary of any part, once the bulk actions are 
bounded. Consequently, standard arguments (see \cite{T}) allow us
to obtain the action-reaction principle (\ref{a-r}) for $\mathfrak{t}$ and the 
Cauchy theorem (\ref{Cauchy}) about the existence of a stress independent of $n$. 
Notice that the existence of the stress tensor can be obtained in less stringent regularity 
assumptions (see, e.g., \cite{Seg1} and \cite{Sil}).

On defining $r:=\left( y-y_{0}\right) \times
b_{a}^{\ddagger }+\nu \times \beta _{a}^{\ddagger }$ and $p:=\left(
y-y_{0}\right) \times \mathfrak{t}+\nu \times \tau $, the integral equation %
\ref{IBC} writes obviously as%
\begin{equation*}
\int_{\mathfrak{b}_{a}}r\text{ }d\mu (y)+\int_{\partial \mathfrak{b}_{a}}p%
\text{ }d\mathcal{H}^{2}=0.
\end{equation*}%
Since $y_{0}$ is arbitrary, once we choose $y$, we can select $y_{0}$ such
that the above boundedness assumption about $\left\vert b_{a}^{\ddagger
}\right\vert $ over $\mathcal{B}$ may imply the one of $\left(
y-y_{0}\right) \times b_{a}^{\ddagger }$. Moreover, the above assumption of
the boundedness $\left\vert \nu \times \beta _{a}^{\ddagger }\right\vert $
over $\mathcal{B}$ implies the one of $r$. The assumed boundedness of the
first integral implies the one of the absolute value of the right-hand side
term, so that we can apply the standard procedure adopted for the traction $%
\mathfrak{t}$ (see \cite{T}), obtaining a non-standard action-reaction
principle (\ref{a-r-phason}). We find, in fact,%
\begin{equation*}
p\left( x,n\right) =-p\left( x,-n\right) ,
\end{equation*}%
i.e.%
\begin{equation*}
\left( y-y_{0}\right) \times \left( \mathfrak{t}\left( x,n\right) -\mathfrak{%
t}\left( x,-n\right) \right) +\nu \times \left( \tau \left( x,n\right) -\tau
\left( x,-n\right) \right) =0,
\end{equation*}%
so that from (\ref{a-r}) we find (\ref{a-r-phason}).

Not writing explicitly time for the sake of conciseness, we can use a
tetrahedron type argument to show the linearity of $p$ with respect to $n$,
namely we show the existence of a second-rank tensor $A\left( x\right) $
such that%
\begin{equation*}
p\left( x,n\right) =A\left( x\right) n\left( x\right) .
\end{equation*}%
Then we find%
\begin{equation*}
\left( y\left( x\right) -y_{0}\right) \times \mathfrak{t}+\nu \left(
x\right) \times \tau =\left( y\left( x\right) -y_{0}\right) \times P\left(
x\right) n\left( x\right) +\nu \left( x\right) \times \tau =A\left( x\right)
n\left( x\right) ,
\end{equation*}%
so that%
\begin{equation*}
\nu \left( x\right) \times \tau \left( x,n\right) =A\left( x\right) n\left(
x\right) -\left( y\left( x\right) -y_{0}\right) \times P\left( x\right)
n\left( x\right) ,
\end{equation*}%
which implies the linearity of $\tau \left( x,n\right) $ with respect to $n$.

The localization of the integral balance of forces, namely equation
(\ref{IBF}), which is possible due to the arbitrariness of $\mathfrak{b}_{a}$, 
gives rise to the local balance (\ref{pointCau}).
In contrast, the localization of the integral balance (\ref{IBC}) implies
\begin{equation*}
\nu _{a}\times (\mathrm{div}\mathcal{S}_{a}+\beta_{a} ^{\ddag })=\mathrm{e}\sigma^{T}-(\nabla_{y}
\nu _{a})^{T }\mathcal{S}_{a},
\end{equation*}
which indicates the existence of a vector, say $z_{a}$, satisfying the
equation (\ref{f3}) with the constraint (\ref{f2}).
\end{proof}

\subsection{Inertia terms}
A standard assumption is the additive decomposition of the bulk forces
$b^{\ddagger}$ into inertial, $b^{in}$, and non-inertial, $b$
components, the former identified by definition by equating their
power to the negative of the time rate of the (canonical) kinetic energy.
Since we do not have in mind any external bulk direct action over the phason flips, except perhaps the possible influence of magnetic fields in magnetizable quasicrystals, a material class not treated here, according to the proposal in \cite{MP13}, we consider $\beta^{\ddagger}$ just with inertial nature.

Our inertia axiom is then the presumed validity of the integral
equation
\begin{equation*}
\text{rate of the kinetic energy of} \text{ }\mathfrak{b}_{a}= -\int_{\mathfrak{b}_{a}}(b_{a}^{in}\cdot\mathrm{v}+\beta_{a}^{\ddagger}\cdot\upsilon)\: d\mu(y),
\end{equation*}
for any choice of the rates involved and any part $\mathfrak{b}_{a}$.\footnote{Notice that these requirements are superabundant for we
could just considered the integral extended to the whole body
macroscopic shape $\mathcal{B}_{a}$, the arbitrariness of the rate fields allowing the selection of parts through the possibility of choosing
compactly supported rate fields. However, we maintain the
superabundant choice because it seems to us that it puts better in
evidence the physical nature of the requirement, which would fail in the
relativistic setting for the arbitrariness of $\mathfrak{b}_{a}$ could be
maintained, while the rate fields could not be selected at will.}
The key point is then the expression of the kinetic energy.
There is a debate about the possible existence of a peculiar phason kinetic energy.
On one side, who is interested in having
a structure duplicating the standard elasticity in a higher-dimensional space,
with the obvious analytical advantages, would hope for it.
However, just
three sound-like branches seem to appear in dynamic spectra recorded in
experiments (see, e.g., \cite{SvS}) so that we should be inclined
in not considering phason inertia.
For this reason, we write explicitly the previous balance as
\begin{equation}
\frac{d}{dt}\int_{\mathfrak{b}_{a}}\frac{1}{2}\rho|v|^{2}\text{ }d\mu (y)= -\int_{\mathfrak{b}_{a}}(b_{a}^{in}\cdot\mathrm{v}+\beta_{a}^{\ddagger}\cdot\upsilon)\text{ }d\mu (y), \label{In}
\end{equation}
where $\rho:=\tilde{\rho}(y,t)$ is the value at $y$ and $t$ of the mass density, assumed to be differentiable with respect to its entries, in the actual shape of the body. Here we presume
that $\rho$ is conserved along the motion, i.e. it satisfies the local mass balance
\begin{equation*}
\frac{\partial\rho}{\partial t}+\mathrm{div}(\rho\mathrm{v})=0.
\end{equation*}
By taking into account the previous equation and applying a standard
transport theorem to compute the time derivative of the first integral
in the equation (\ref{In}), in which the integration domain depends on time, the arbitrariness of $\mathfrak{b}_{a}$--or the one of the rate
fields, which is the same--implies
\begin{equation*}
b_{a}^{in}=-\rho a,
\end{equation*}
with $a:=\dot{\mathrm{v}}$ the \textbf{acceleration} in Euclidean
representation, and
\begin{equation*}
\beta_{a}^{\ddagger}\cdot\upsilon=0.
\end{equation*}
This last identity implies that $\beta_{a}^{\ddagger}$ must be of the form
$\beta_{a}^{\ddagger}=h \times\nu$, with $h$ a generic vector.
P. M. Mariano and J. Planas suggested \cite{MP13} to identify the vector $h$
with $-\mathrm{curl}\text{ }\mathrm{v}$ so that we have
\begin{equation*}
\beta_{a}^{\ddagger}=-(\mathrm{curl}\text{ }\mathrm{v})\times\upsilon.
\end{equation*}
Such a choice is motivated by the presumption that the local deformation
spin tends to rotate the lattice structures influencing the way the
atomic flips may develop. At continuum scale such circumstance should generate a coupled 
gyroscopic effect, as represented above.

\subsection{Constitutive structures}
Constitutive restriction on the dependence of the stresses and the 
phason self-action on the state variables characterizing a material class
are prescribed by the need of not violating the second-law of
thermodynamics. The statement remains vague till we specify an
expression of the second law.
In large strain regime, in which we
distinguish between reference and actual shapes, it is natural to
write such an expression in referential form. For it we write in isothermal setting
\begin{equation*}
\frac{d}{dt}\int_{\mathfrak{b}}\psi\text{ }d\mu(x)-\mathcal{P}^{ext}_{\mathfrak{b}}(\dot{y},\dot{\nu})\leq 0,
\end{equation*}
adapting to the description of quasicrystals the traditional viewpoint
in continuum mechanics on the constitutive matter  (see \cite{C-N} for it).
In the previous
inequality $\psi$ is the \textbf{free energy density} and
$\mathcal{P}^{ext}_{\mathfrak{b}}(\dot{y},\dot{\nu})$ the
referential description of the external power, obtained by changing
variables in the integrals. It reads
\begin{equation*}
\mathcal{P}^{ext}_{\mathfrak{b}}(\dot{y},\dot{\nu}):=
\int_{\mathfrak{b}}(b^{\ddagger}\cdot\dot{y}+\beta^{\ddagger}\cdot\dot{\nu})\: d\mu(x)+\int_{\partial\mathfrak{b}}(\mathfrak{t}\cdot\dot{y}+\tau\cdot\dot{\nu})\: d\mathcal{H}^{2},
\end{equation*}
where $b^{\ddagger}:=(\mathrm{det}F)b_{a}^{\ddagger}$,
$\beta^{\ddagger}:=(\mathrm{det}F)\beta_{a}^{\ddagger}$,
$\mathfrak{t}$ and $\tau$ are considered as the values of fields
defined over $\mathcal{B}$ through $\tilde{\mathfrak{t}}(y(x,t),t)$
and $\tilde{\tau}(y(x,t),t)$. With this version of the external power,
the invariance axiom above requires the identity
$\mathcal{P}^{ext}_{\mathfrak{b}}(\dot{y},\dot{\nu})=
\mathcal{P}^{ext}_{\mathfrak{b}}(\dot{y}^{*},\dot{\nu}^{*})$
for any choice of $c$, $q$, involved in the definitions of $\dot{y}^{*}$
and $\dot{\nu}^{*}$, and the part considered. By exploiting such identity
we can prove the referential version of Theorem 2.1, which includes, in
particular, the relation
\begin{equation*}
\mathcal{P}^{ext}_{\mathfrak{b}}(\dot{y},\dot{\nu}):=
\int_{\mathfrak{b}}(P\cdot\dot{F}+z\cdot \dot{\nu}+\mathcal{S}\cdot\dot{N})\: d\mu(x)
\end{equation*}
to be substituted into the mechanical dissipation inequality. In the
previous identity, $P$ is the standard first Piola-Kirchhoff stress
tensor defined by $P:=(\mathrm{det}F)\sigma F^{-\mathrm{T}}$,
$z$ the referential phason self-action $z:=(\mathrm{det}F)z_{a}$
and $\mathcal{S}$ the referential phason stress (or microstress if you
want to accept a nomenclature more common to the general model-building framework of the mechanics of complex materials,
which includes the quasicrystal modeling) $\mathcal{S}:= (\mathrm{det}F)\mathcal{S}_{a} F^{-\mathrm{T}}$.
\begin{itemize}
\item When we presume that the free energy $\psi$, the stresses $P$
and $\mathcal{S}$, and the self-action $z$ depend all on $F$, $N$ and $\nu$, besides $x$, with $\psi$ a differentiable function of its entries, the arbitrariness of the rates in the mechanical dissipation inequality implies
the constitutive restrictions
\begin{equation*}
P=\frac{\partial \psi}{\partial F},\:\: \mathcal{S}=\frac{\partial \psi}{\partial N},\:\:z=\frac{\partial \psi}{\partial \nu},
\end{equation*}
i.e.
\begin{equation*}
\sigma=(\mathrm{det}F)^{-1}\frac{\partial \psi}{\partial F}F^{\mathrm{T}},\:\: \mathcal{S}_{a}=(\mathrm{det}F)^{-1}\frac{\partial \psi}{\partial N}F^{\mathrm{T}},\:\:z_{a}=(\mathrm{det}F)^{-1}\frac{\partial \psi}{\partial \nu}.
\end{equation*}
They characterize the elastic setting for quasicrystals.
\item By fixing $\nu$ and $N$, a standard argument shows that objectivity for $\psi$, i.e. invariance under the action of $SO(3)$ on the physical space, and convexity of $\psi$ with respect to $F$ are physically incompatible. Consequently, we commonly accept a polyconvex dependence of $\psi$ on $F$. With respect to $N$, the free energy can be quadratic in the so-called \emph{phason locked phase}, and $\psi$ may admit a decomposed Ginzburg-Landau-type structure.
In the so-called \emph{phason unlocked phase}, $\psi$ depends on $|N|$.
With reference to the phason locked phase,
the existence of ground states (minimizers of the energy) has been found in \cite{MM} as a special case of a more general result presented there, further generalized in \cite{FMS}.
\item In small strain regime the dependence of the energy
can be quadratic. With reference to the homogeneous and isotropic case, with $\varepsilon:=\mathrm{Sym}\nabla u$ the small strain tensor and $I$
the second-rank unit tensor, a rather general expression of the energy
has been derived in \cite{MP13}; it reads
\begin{equation}
\begin{aligned}
\psi =&\frac{1}{2}\lambda \left( \varepsilon \cdot I\right) ^{2}+\mu
\varepsilon \cdot \varepsilon  \\ 
&+\frac{1}{2}k_{1}\left( N\cdot I\right) ^{2}+k_{2}\mathrm{Sym}N\cdot 
\mathrm{Sym}N+k_{2}^{\prime }\mathrm{Skw}N\cdot \mathrm{Skw}N \\
&+k_{3}\left( \varepsilon \cdot I\right) \left( N\cdot I\right)
+k_{3}^{\prime }\mathrm{Sym}N\cdot \varepsilon  \\
&+\frac{1}{2}k_{0}\left\vert \nu \right\vert ^{2} \label{energy}
\end{aligned}
\end{equation}
from which we get%
\begin{equation}
\sigma =\lambda \left( \mathrm{tr}\varepsilon \right) I+2\mu \varepsilon
+k_{3}\left( \mathrm{tr}N\right) I+k_{3}^{\prime }\mathrm{Sym}N, \label{sigma}
\end{equation}%
\begin{equation}
z_{a}=k_{0}\nu , \label{zeta}
\end{equation}%
\begin{equation}
\mathcal{S}_{a}=k_{1}\left( \mathrm{tr}N\right) I+2k_{2}\mathrm{Sym}%
N+2k_{2}^{\prime }\mathrm{Skw}N+k_{3}\left( \mathrm{tr}\varepsilon \right)
I+k_{3}^{\prime }\varepsilon . \label{microstress}
\end{equation}%
$\lambda $ and $\mu $ are standard Lam\'{e} constants. The others are
elastic constants related with the phason field. Experiments inform us
about the values of $k_{i}$, $i=1,2,3$, with some fluctuations in the literature (see, e.g., \cite{Amazit}, \cite{Letoublon}, \cite{Richer}, \cite{Walz}), but we do not know $k'_{2}$, $k'_{3}$ and $k_{0}$.
\item The quadratic expression of the energy written above is a special
case of
\begin{equation*}
\psi\left( \nabla u,\nu ,\nabla\nu \right) =\frac{1}{2}\nabla u\cdot \mathbb{C}\nabla u+\nabla u\cdot 
\mathbb{K}^{\prime }\nabla\nu +\frac{1}{2}\nabla\nu \cdot \mathbb{K}\nabla\nu +\frac{1}{2}%
k_{0}\left\vert \nu \right\vert ^{2},
\end{equation*}
with $\mathbb{C}$, $\mathbb{K}$ and $\mathbb{K}'$ constitutive fourth-rank tensors, endowed at least with major symmetries, but
 \emph{it is a bit more general then} the common choice
\begin{equation*}
\mathbb{C}_{ijhk}=\lambda \delta _{ij}\delta _{hk}+\mu \left( \delta
_{ih}\delta _{jk}+\delta _{ik}\delta _{jh}\right) ,
\end{equation*}%
\begin{equation*}
\mathbb{K}_{ijhk}^{\prime }=k_{1}\delta _{ih}\delta _{jk}+k_{2}\left( \delta
_{ij}\delta _{hk}-\delta _{ik}\delta _{jh}\right) ,
\end{equation*}%
\begin{equation*}
\mathbb{K}_{ijhk}=k_{3}\left( \delta _{i1}-\delta _{i2}\right) \left( \delta
_{ij}\delta _{hk}-\delta _{ih}\delta _{jk}+\delta _{ik}\delta _{jh}\right) 
\end{equation*}%
(see, e.g., \cite{HWD}), where $i,j,h,k=1,2,3 $, $\delta _{ij}$ is the Kronecker symbol and the constants satisfy the inequalities
\begin{eqnarray*}
\mu  &>&0,\text{ \ \ }\lambda +\mu >0,\text{ \ \ }k_{1}>0, \\
k_{1} &>&\left\vert k_{2}\right\vert ,\text{ \ \ }\left\vert
k_{3}\right\vert <\sqrt{\frac{1}{2}\mu \left( k_{1}+k_{2}\right) },\text{ }\:%
k_{0}\geq 0,
\end{eqnarray*}
allowing nonnegative definition of the energy.
\item We could also imagine to have viscous effects represented through
the dependence of the stresses $P$, $\mathcal{S}$ and the self-action 
$z$ on $\dot{F}$, $\dot{N}$ and $\dot{\nu}$, besides $F$, $N$ and 
$\nu$. The mechanical dissipation inequality excludes the dependence
of $\psi$ on $\dot{F}$, $\dot{N}$ and $\dot{\nu}$, provided that $F$, $N$ and $\nu$ are twice differentiable in time. Consequently,
to be compatible with the second law of thermodynamics, $P$, $\mathcal{S}$ and $z$ must admit a constitutive dependence on the rates of the state variables of the form
\begin{equation*}
P=\tilde{P}(F,N,\nu,\dot{F},\dot{N}, \dot{\nu})=\tilde{P}^{e}(F,N,\nu)+\tilde{P}^{d}(F,N,\nu,\dot{F},\dot{N}, \dot{\nu}),
\end{equation*}
\begin{equation*}
\mathcal{S}=\tilde{\mathcal{S}}(F,N,\nu,\dot{F},\dot{N}, \dot{\nu})=\tilde{\mathcal{S}}^{e}(F,N,\nu)+\tilde{\mathcal{S}}^{d}(F,N,\nu,\dot{F},\dot{N}, \dot{\nu}),
\end{equation*}
\begin{equation*}
z=\tilde{z}(F,N,\nu,\dot{F},\dot{N}, \dot{\nu})=\tilde{z}^{e}(F,N,\nu)+\tilde{z}^{d}(F,N,\nu,\dot{F},\dot{N}, \dot{\nu}),
\end{equation*}
where the superscripts $e$ and $d$ indicate the energetic and the dissipative (viscous) components. By inserting these choices in the mechanical dissipation inequality, we find the dependence of the energetic components of $P$, $\mathcal{S}$ and $z$ form the derivatives of the free energy, as recalled above, with the consequent
expressions of $\sigma^{e}$, $\mathcal{S}^{e}_{a}$ and $z^{e}_{a}$, and the
reduced dissipation inequality
\begin{equation*}
P^{d}\cdot \dot{F}+z^{d}\cdot\dot{\nu}+\mathcal{S}^{d}\cdot \dot{N}\geq 0
\end{equation*}
valid for any choice of the velocity fields, which implies that $P^{d}$, $\mathcal{S}^{d}$ and $z^{d}$ can be considered linear functions of
$\dot{F}$, $\dot{N}$ and $\dot{\nu}$ as their actual counterparts
$\sigma^{d}$, $\mathcal{S}^{d}_{a}$ and $z^{d}_{a}$.

In this case and in small strain setting, for the energetic components of $\sigma$, $\mathcal{S}_{a}$ and $z_{a}$ we
shall consider the energy (\ref{energy}) and dissipative components of the stresses and the self-action given by
\begin{equation*}
\sigma^{d}=\epsilon\nabla u_{t},\:\:\:\mathcal{S}_{a}^{d}=\delta\nabla\nu_{t},
\:\:\:z_{a}=\varsigma\nu_{t},
\end{equation*}
with $\epsilon$, $\delta$ and $\varsigma$ positive constants.
Consequently, the constitutive equations (\ref{sigma}), (\ref{zeta}) and (\ref{microstress}) become
\begin{equation}
\sigma =\lambda \left( \mathrm{tr}\varepsilon \right) I+2\mu \varepsilon
+k_{3}\left( \mathrm{tr}N\right) I+k_{3}^{\prime }\mathrm{Sym}N
+\epsilon\nabla u_{t}, \label{sigma-dis}
\end{equation}%
\begin{equation}
z_{a}=k_{0}\nu + \varsigma\nu_{t}, \label{zeta-dis}
\end{equation}%
\begin{equation}
\mathcal{S}_{a}=k_{1}\left( \mathrm{tr}N\right) I+2k_{2}\mathrm{Sym}%
N+2k_{2}^{\prime }\mathrm{Skw}N+k_{3}\left( \mathrm{tr}\varepsilon \right)
I+k_{3}^{\prime }\varepsilon + \delta\nabla\nu_{t} . \label{microstress-dis}
\end{equation}%
We shall use the constitutive equations (\ref{sigma-dis}) and (\ref*{microstress-dis}) just for technical purposes, due to the regularization induced by the gradients of the rate fields. We
remark here only their mechanical motivation but we do not investigate
further their experimental correspondence for we shall calculate the
limits as $\epsilon$ and $\delta$ tend to zero. In contrast,
there are estimates for $\varsigma$ (see \cite{RL}).
\end{itemize}

\section{Existence results: the linear case}

\subsection{Dynamics with phason diffusion and absence of gyroscopic effects}
In small strain regime and under the validity
of the linear constitutive structures
(\ref{sigma}), (\ref{microstress}) and (\ref{zeta-dis}), in absence of  non-inertial body forces and gyroscopic-type phason inertia,
by imposing $u$ and $\nu$ along $\partial\mathcal{B}$
(Dirichlet boundary conditions) and their values together with those of the velocity $u_{t}$ over $\mathcal{B}$ as initial conditions, the balance equations read
\begin{equation} \label{eq:quasicrystal} %\left\{ \begin{array}{ll} 
\begin{aligned}
&\mu \Delta u + \xi \nabla \div u + \kappa \Delta \nu + \bar \xi
      \nabla \div \nu = \rho u_{tt}
      & \textrm{ in } (0, T)\X \mathcal{B},\\[0.1 cm]
     &\zeta \Delta \nu + \gamma \nabla \div \nu + \kappa \Delta u
      +\bar \xi \nabla \div u -\kappa_0 \nu= \varsigma \nu_t
      & \textrm{ in } (0, T)\X \mathcal{B},\\[0.1 cm]
      &u(t, x) = \bar u(x),\,\, \nu(t, x) = \bar \nu (x),
      & \textrm{ on } (0, T)\X\D \mathcal{B}, \\[0.1 cm]
      &u|_{t=0}=u_0, \,\, u_t|_{t=0}=\dot u_0,\,\, \nu|_{t=0}=\nu_0, 
      & \textrm{ on } \mathcal{B},
    \end{aligned}
    % \end{array}\right.
\end{equation}
where $u_0$, $\dot u_0$ and $\nu_0$ are the initial
data, and the constitutive parameters are constants and satisfy the following relations: $
\xi = \lambda + \mu$, $\bar \xi =k_3 + \frac{1}{2} k'_3$, $\zeta =
k_2 + k'_2$, $\gamma = k_1 + k_2 - k'_2$,  $\kappa =\frac{1}{2} k'_3$, and $\lambda$, $\mu$,
$k_i$, $k'_i$, $i=1,2,3$.

\subsection{Preliminaries and notations}
For
$p\geq 1$, by $L^p(\mathcal{B})$ we indicate the usual Lebesgue space with
norm $\|\cdot\|_p$. For $L^2$ we use the
notation $\|\cdot \|=\|\cdot \|_2$. Moreover, by $W^{k,p}(\mathcal{B})$,
$k$ a non-negative integer and $p$ as above, we denote the usual
Sobolev space with norm $\|\cdot\|_{k,p}$. We write
$W^{1,p}_0(\mathcal{B})$ for the closure of $C_0^{\infty}(\mathcal{B})$ in
$W^{1,p}(\mathcal{B})$ and $W^{-1, p'}(\mathcal{B})$, $p' = p/(p -1)$, for the
dual of $W^{1, p}(\mathcal{B})$ with norm $\|\cdot\|_{-1, p'}$.  Let $X$ be
a real Banach space with norm $\|\cdot\|_X$. We shall use the customary
spaces $W^{k,p}(0, T;X)$, with norm denoted by
$\|\cdot\|_{W^{k,p}(0,T;X)}$, recalling that $W^{0,p}(0, T;X)=L^{p}(0,
T;X)$ are the standard Bochner spaces.  The symbol $\langle\,
\cdot\,,\, \cdot\,\rangle$ indicates as usual the duality pairing.  Here and
in the sequel, we denote by $c$ or $\bar c$ positive constants that
may assume different values, even in the same equation. We also define
\begin{align*}
  \mathcal{H}^1 :=\big\{ v\in W^{1,2}(\mathcal{B})\, :\,
  v_{\vert_{\D\mathcal{B}}}= 0\big\},
\end{align*}
with dual space $\mathcal{H}^{-1}$. We denote by $\mathcal{B}_T$ the
set product $(0, T)\X \mathcal{B}$ and, similarly, with $\D\mathcal{B}_T$ we
indicate $(0, T)\X \D\mathcal{B}$.

\subsection{Existence and uniqueness of weak solutions to \eqref{eq:quasicrystal}}

\begin{defin}[Weak solution] \label{def:weak-reg} We affirm that
  a pair $(u, \nu)$ is a \textquotedblleft weak solution'' to the system
  \eqref{eq:quasicrystal} if, for a given $T>0$, the following conditions hold true: \vspace{-0.3 cm}
  \begin{align}
    \intertext{Regularity:} & 
\begin{aligned}
 &u \in L^\infty(0, T ; \mathcal{H}^1)
    \cap C([0, T ]; L^2(\mathcal{B}))  \cap  C_{weak}([0, T ];
    \mathcal{H}^1),\\
 &\nu \in L^2(0, T ; \mathcal{H}^1) \cap C([0, T ]; L^2(\mathcal{B})),
  \end{aligned} \label{eq:u-nu-reg}\\
    & \begin{aligned}
& u_t \in L^\infty(0, T ; L^2(\mathcal{B})) \cap C_{weak}([0, T ]; L^2(\mathcal{B})),  \,\,
    u_{tt} \in L^2(0, T ; \mathcal{H}^{-1}),\\
    &\nu_t \in L^2(0, T ; L^2(\mathcal{B})).
  \end{aligned}
  \label{eq:Dt-u-nu-reg}\\
    \intertext{Weak formulation: For all $(w, h)\in C_0^\infty(0, T;
      \mathcal{H}^1)\X C^\infty_0(0, T; \mathcal{H}^1)$,}
    & \begin{aligned} \label{eq:weak-form-1} \rho &\int_0^T
      \int_{\mathcal{B}} u_{tt}\, \cdot w + \mu \int_0^T \int_{\mathcal{B}}
      \nabla u \cdot \nabla w + \kappa \int_0^T \int_{\mathcal{B}} \nabla
      \nu \cdot \nabla w
      \\
      &= \int_0^T \int_{\D\mathcal{B}} w\cdot\big(\mu\frac{\D u}{\D n} +
      \kappa\frac{\D \nu}{\D n}\big) + \xi \int_0^T \int_{\mathcal{B}}
      \nabla (\div u) \cdot w + \bar \xi \int_0^T \int_{\mathcal{B}} \nabla
      (\div \nu) \cdot w \!\!\!
    \end{aligned}\\[2 mm]
    & \begin{aligned} \label{eq:weak-form-2} \int_0^T & \int_{\mathcal{B}}
      (\varsigma \nu_t + \kappa_0\nu)\cdot h + \zeta \int_0^T
      \int_{\mathcal{B}} \nabla \nu \cdot \nabla h +
      \kappa \int_0^T \int_{\mathcal{B}} \nabla u \cdot \nabla h \\
      &= \int_0^T \int_{\D\mathcal{B}} h \cdot \big(\kappa\frac{\D u}{\D n}
      + \zeta\frac{\D \nu}{\D n}\big) + \gamma \int_0^T \int_{\mathcal{B}}
      \nabla (\div \nu) \cdot h + \bar \xi \int_0^T \int_{\mathcal{B}} \nabla
      (\div u) \cdot h \!\!\!
    \end{aligned}
  \end{align}
\end{defin}

\noindent where, in order to keep the notation concise, we have erased
all volume, surface and time measures from the space-time integrals
above, a choice that we adopt for the remainder of the paper.

\noindent To prove our existence result, we use the Galerkin method to
approximate a regular weak solution to \eqref{eq:quasicrystal} with
finite dimensional displacement and phason vector fields. This is a
classical argument. Details can be found, e.g., in
\cite{Lions:1969}.  

\noindent Let us consider the set $\{\omega_k \}_{k\in \N}$
of eigenfunctions, with corresponding eigenvalues $\{\lambda_k
\}$, of the problem
\begin{equation*}
  \left. \begin{array}{ll}
      -\mu \Delta u = \lambda u & \textrm{ in } \mathcal{B},\\
      u=\bar u &  \textrm{ on } \D\mathcal{B},
    \end{array} \right.
\end{equation*}
we define $X_m := \spann\{\omega_1,\dots, \omega_m\}$ and indicate by $P_m$ the
orthogonal projection operator from $\mathcal{H}^1$ over
$X_m$. Similarly, we also introduce the set $\{\vartheta_r \}_{r\in
  \N}$ of the eigenfunctions, with corresponding eigenvalues
$\{\varpi_r \}$, of
\begin{equation*}
  \left. \begin{array}{ll}
      -\zeta \Delta \nu + \kappa_0\nu = \varpi \nu & \textrm{ in } \mathcal{B},\\
      \nu=\bar \nu &  \textrm{ on } \D\mathcal{B}.
    \end{array} \right.
\end{equation*}
We define $Y_n :=
\spann\{\vartheta_1,\dots, \vartheta_n\}$ and indicate by $\Pi_n $ the
orthogonal projection from $\mathcal{H}^1$ over $Y_n$.

\noindent We are looking to approximate functions
\begin{equation}
  u^m(t, x) =\sum_{i=1}^m d_i^m(t)\omega_i(x)\,\, \textrm{ and }\,\,
  \nu^m(t, x) =\sum_{j=1}^m e_j^m(t)\vartheta_j(x), 
\end{equation}
which are solutions of the system of ODEs below, for all $( \omega_k,
\vartheta_r) \in X_m\X Y_m$, $1\leq k \leq m$, $1\leq r \leq m$, and
$t \in [0, T ]$:
\begin{align*}
  & \begin{aligned} \rho \int_{\mathcal{B}} & u^m_{tt}\, \cdot \omega_k +
    \mu \int_{\mathcal{B}} \nabla u^m \cdot \nabla \omega_k + \kappa
    \int_{\mathcal{B}} \nabla \nu^m \cdot \nabla \omega_k
    \\
    &= \int_{\D\mathcal{B}} \big(\mu^m\frac{\D u^m}{\D n} + \kappa\frac{\D
      \nu^m}{\D n}\big)\cdot \omega_k + \xi \int_{\mathcal{B}} \nabla (\div
    u^m )\cdot \omega_k + \bar \xi \int_{\mathcal{B}} \nabla (\div \nu^m)
    \cdot \omega_k,
  \end{aligned}\\[2 mm]
  & \begin{aligned} \int_{\mathcal{B}} & (\varsigma \nu^m_t +
    \kappa_0\nu^m)\cdot \vartheta_r + \zeta \int_{\mathcal{B}} \nabla \nu^m
    \cdot \nabla \vartheta_r +
    \kappa  \int_{\mathcal{B}} \nabla u^m \cdot \nabla \vartheta_r \\
    &= \int_{\D\mathcal{B}} \vartheta_r \cdot \big(\kappa\frac{\D u^m}{\D
      n} + \zeta\frac{\D \nu^m}{\D n}\big) + \gamma \int_{\mathcal{B}}
    \nabla (\div \nu^m) \cdot \vartheta_r + \bar \xi \int_{\mathcal{B}}
    \nabla (\div u^m) \cdot \vartheta_r.
  \end{aligned}
\end{align*}

\noindent As a consequence, we have the following inclusions: 
$u^m \in L^2(0, T ; X_m )$, $\nu^m \in L^2(0, T; Y_m)$,
$u^m_t \in L^2(0, T ; X_m )$ and $\nu_t^m\in L^2(0, T; Y_m)$. 
The Sobolev embedding theorem for functions (of a single variable
$t$) implies $u_m \in C([0, T ]; X_m )$ and $\nu_m \in C([0, T ];
Y_m )$, so the initial conditions $u_m (0) = P_m u_0$ and
$\nu_m(0)=\Pi_m \nu_0$ make sense.\smallskip

\noindent The Galerkin approximation procedure, combined with a compactness
argument (the Aubin-Lions lemma) and suitable a priori estimates, implies a first result.

\begin{theorem} \label{thm:linear-case} Assume $\mu >  -\lambda$,
$\kappa >0$, $\bar \xi >0$, %$\gamma>0$, 
$ \mu, \zeta > 2\kappa$, and 
$\xi, \gamma > 2\bar{\xi}$.
%, $\kappa >0$, $\bar \xi >0$, $\gamma>0$ and $\mu >
%  -\lambda$.
  Assume also $u_0, \nu_0\in W^{1,2}(\mathcal{B})$ so that $\nabla
  u(0, x)= \nabla u_0$ and $\nabla \nu (0, x) = \nabla \nu_0 (x)$ on
  $\mathcal{B}$ %\textcolor{blue}{
  and $\bar u, \bar \nu \in L^2(\D\mathcal{B})$.
   % (or relaxing just a little bit the hypothesis we can assume that
  %  $\bar u, \bar \nu \in \mathcal{H}^{1/2}$).}  
  Then, a
  unique regular weak solution to the problem \eqref{eq:quasicrystal} exists.
\end{theorem}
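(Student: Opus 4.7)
The plan is to implement the Galerkin scheme already set up before the statement, deriving coercive energy estimates that absorb the cross-coupling between $u$ and $\nu$, passing to the limit by compactness, and finally proving uniqueness by the same energy identity applied to the difference of two solutions.

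First, I would view the approximate system as an ODE in the coefficient vectors $(d^m_1(t),\dots,d^m_m(t))$ and $(e^m_1(t),\dots,e^m_m(t))$. Because the bilinear form on the right is linear in $(u^m,\nu^m)$ and the mass matrix on $\rho\int u^m_{tt}\cdot\omega_k$ is constant and positive definite, the Cauchy--Lipschitz theorem gives local existence; continuation to $[0,T]$ follows from the a priori bounds to be derived. The initial data are fixed by $u^m(0)=P_m u_0$, $u^m_t(0)=P_m \dot u_0$, $\nu^m(0)=\Pi_m\nu_0$.

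Next come the key a priori estimates. I would test the first approximate equation against $u^m_t$ and the second against $\nu^m_t$ and add. Integration by parts together with the boundary terms arising on $\partial\mathcal{B}$ produces a time derivative of
\begin{equation*}
  E^m(t):=\tfrac{\rho}{2}\|u^m_t\|^2+\tfrac{\mu}{2}\|\nabla u^m\|^2+\tfrac{\xi}{2}\|\div u^m\|^2+\tfrac{\zeta}{2}\|\nabla\nu^m\|^2+\tfrac{\gamma}{2}\|\div\nu^m\|^2+\tfrac{\kappa_0}{2}\|\nu^m\|^2,
\end{equation*}
plus the cross-coupling contributions
\begin{equation*}
  \kappa\!\int\!\nabla\nu^m\!\cdot\!\nabla u^m_t+\kappa\!\int\!\nabla u^m\!\cdot\!\nabla\nu^m_t=\kappa\tfrac{d}{dt}\!\int\!\nabla u^m\!\cdot\!\nabla\nu^m,
\end{equation*}
and an analogous identity for $\bar\xi$ on the divergence terms. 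Thus the summed identity reads $\tfrac{d}{dt}\mathcal{E}^m(t)+\varsigma\|\nu^m_t\|^2=\text{(boundary terms)}$. Young's inequality applied to $|\kappa\int\nabla u^m\cdot\nabla\nu^m|$ and $|\bar\xi\int\div u^m\,\div\nu^m|$ together with the hypotheses $\mu,\zeta>2\kappa$ and $\xi,\gamma>2\bar\xi$ (and $\mu>-\lambda$, i.e.\ $\xi>0$) yields coercivity of $\mathcal{E}^m$ over $\|u^m_t\|^2+\|\nabla u^m\|^2+\|\nabla\nu^m\|^2+\|\nu^m\|^2$. A Gronwall argument (after controlling the boundary contributions through trace estimates using $\bar u,\bar\nu\in L^2(\partial\mathcal{B})$) then yields uniform bounds producing the regularity listed in \eqref{eq:u-nu-reg}--\eqref{eq:Dt-u-nu-reg}; the bound on $u^m_{tt}$ in $L^2(0,T;\mathcal{H}^{-1})$ is read off from the first equation.

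With these bounds in hand, I would extract weak-* and weak convergent subsequences and use the Aubin--Lions lemma to obtain strong convergence of $u^m$ in $C([0,T];L^2(\mathcal{B}))$ and of $\nu^m$ in $L^2(0,T;L^2(\mathcal{B}))$. This passes to the limit in each term of the Galerkin equations, giving a weak solution; the initial conditions are recovered from the convergences $u^m(0)\to u_0$, $u^m_t(0)\to \dot u_0$, $\nu^m(0)\to\nu_0$ and the weak continuity in time of $u$, $u_t$, $\nu$. For uniqueness, let $(\hat u,\hat\nu)$ denote the difference of two weak solutions: it has zero initial and boundary data and satisfies the same linear system. Testing (after a standard Steklov-type regularization in time to justify $\hat u_t$, $\hat\nu_t$ as admissible test fields) the two equations against $\hat u_t$ and $\hat\nu_t$ respectively and summing yields the same energy identity as above with $\mathcal{E}(0)=0$ and $\varsigma\|\hat\nu_t\|^2\geq 0$; coercivity forces $\hat u\equiv 0$ and $\hat\nu\equiv 0$. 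The main obstacle, and the reason the stated inequalities on the coefficients are sharp-looking, is the coercivity step: the cross-terms $\kappa\nabla u\cdot\nabla\nu$ and $\bar\xi\,\div u\,\div\nu$ are not sign-definite, so the whole scheme depends on their absorption into the diagonal elastic energy, which is precisely what $\mu,\zeta>2\kappa$ and $\xi,\gamma>2\bar\xi$ are engineered to guarantee.
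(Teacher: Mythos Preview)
Your proposal is correct and follows essentially the same route as the paper: Galerkin scheme, test against $(u^m_t,\nu^m_t)$, recognize the cross terms $\kappa\nabla u\cdot\nabla\nu$ and $\bar\xi\,\div u\,\div\nu$ as exact time derivatives, absorb them by Young's inequality using $\mu,\zeta>2\kappa$ and $\xi,\gamma>2\bar\xi$, then Aubin--Lions and pass to the limit; uniqueness by the same energy computation on the difference.

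One point to correct, however: the boundary integrals should not be ``controlled through trace estimates using $\bar u,\bar\nu\in L^2(\partial\mathcal{B})$''. At the $H^1$ level you have no control of $\partial u/\partial n$ or $\partial\nu/\partial n$ on $\partial\mathcal{B}$, so such estimates would fail. The paper's observation---which you should adopt---is that since the Dirichlet data $\bar u(x),\bar\nu(x)$ are time-independent, one has $u_t=\nu_t=0$ on $\partial\mathcal{B}$, and every boundary term in the energy identity carries a factor $u_t$ or $\nu_t$ and therefore vanishes identically. With that simplification Gronwall is not needed either: after integrating in time, the cross terms are pointwise-in-$t$ quantities and are absorbed directly by Young's inequality into the diagonal energy, exactly as in the paper.
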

\begin{proof}
  We proceed formally (since we lack the needed regularity to test
  directly against $(u, \nu)$ or $(u_t, \nu_t)$), but the procedure
  actually goes through the use of the Galerkin approximation
  functions $(u^m, \nu^m)$. Thus, to keep the notation compact we shall
  use $(u, \nu)$ in place of $( u^m, \nu^m)$, reminding however to adopt
  the sequence of the Galerkin approximations when we extract
  a suitable convergent subsequence.

  \noindent By multiplying first and second equations in
  \eqref{eq:quasicrystal} respectively by $u_{t}$ and $\nu_{t}$ in
  $L^2(\mathcal{B})$, by means of standard calculations we infer that
%  \begin{equation*} \left\{ \begin{array}{l} \!\!\! 
\begin{align*}
            &\begin{aligned}
          \frac{\rho}{2}\frac{d}{dt}\| u_t \|^2 +
          \frac{\mu}{2}\frac{d}{dt}\|\nabla u\|^2 = \mu
          \int_{\D\mathcal{B}} u_t\cdot \frac{\D u}{\D n} + \xi \int_\mathcal{B}
          \nabla (\div u) \cdot u_t
          + & \kappa  \int_\mathcal{B} \Delta \nu \cdot u_t\\
          & + \bar \xi \int_\mathcal{B} \nabla (\div \nu) \cdot u_t,
        \end{aligned}\\[1.5 em]
        &\begin{aligned} \varsigma \|\nu_t\|^2 +
          \frac{\kappa_0}{2}\frac{d}{dt}\|\nu\|^2 +
          \frac{\zeta}{2}\frac{d}{dt}\|\nabla \nu\|^2 = \zeta
          \int_{\D\mathcal{B}} \nu_t\cdot \frac{\D \nu}{\D n} + \gamma
          \int_\mathcal{B} \nabla (\div &\nu) \cdot
          \nu_t +  \kappa \int_\mathcal{B} \Delta u  \cdot \nu_t\\
          &\!\!\! + \bar \xi \int_\mathcal{B} \nabla (\div u) \cdot \nu_t
        \end{aligned}
%      \end{array} \right.
%  \end{equation*} 
\end{align*}
\noindent Hence, by adding them and integrating by parts, we
  obtain
  \begin{equation}\label{eq:total-energy}
    \begin{aligned}
      \frac{\rho}{2} &\frac{d}{dt}\| u_t \|^2 + \varsigma \|\nu_t\|^2
      +
      \frac{\kappa_0}{2}\frac{d}{dt}\|\nu\|^2 \\
      & \quad \quad+ \frac{1}{2}\frac{d}{dt}\big(\mu\|\nabla u\|^2 +
      \zeta\|\nabla \nu\|^2\big)
      +\frac{1}{2}\frac{d}{dt}\big(\xi\|\div u\|^2 +
      \gamma\|\div \nu\|^2\big) \\
      & = \mu\int_{\D\mathcal{B}} u_t\cdot \frac{\D u}{\D n} +
      \zeta\int_{\D\mathcal{B}} \nu_t\cdot \frac{\D \nu}{\D n} -\kappa
      \frac{d}{dt}\int_\mathcal{B} \nabla u\cdot \nabla \nu
      -\bar{\xi} \frac{d}{dt}\int_\mathcal{B} \div u\, (\div \nu)\\
      &\quad\quad+ \xi \int_{\D\mathcal{B}} (\div u) \, u_t\cdot n +
      \gamma\int_{\D\mathcal{B}} (\div \nu) \nu_t\cdot n + \bar{\xi}
      \int_{\D\mathcal{B}} (((\div \nu)\, u_t + (\div u)\, \nu_t)\cdot n.
    \end{aligned}
  \end{equation}

  \noindent Due to the Dirichlet boundary conditions, we find $u_t(t, x)=
  0$ on $\D\mathcal{B}$ as well as $u_t(t, x)= 0$ on $\D\mathcal{B}$.
  Consequently, from the equation (\ref{eq:total-energy}) we get
  \begin{equation*}
    \begin{aligned}
      \frac{\rho}{2} &\frac{d}{dt}\| u_t \|^2 + \varsigma \|\nu_t\|^2
      +
      \frac{\kappa_0}{2}\frac{d}{dt}\|\nu\|^2\\
      & \quad \quad+ \frac{1}{2}\frac{d}{dt}\big(\mu\|\nabla u\|^2 +
      \zeta\|\nabla \nu\|^2\big)
      +\frac{1}{2}\frac{d}{dt}\big(\xi\|\div u\|^2 +
      \gamma\|\div \nu\|^2\big) \\
      & = -\kappa \frac{d}{dt}\int_\mathcal{B} \nabla u\cdot \nabla \nu
      -\bar{\xi} \frac{d}{dt}\int_\mathcal{B} \div u\, (\div \nu),
    \end{aligned}
  \end{equation*}
  and, by integrating in $(0, t)$, $t\leq T$ and exploiting H\"{o}lder's inequality, we compute
  \begin{equation*}
    \begin{aligned}
      \rho&\| u_t \|^2 + 2\varsigma \int_0^t\|\nu_t\|^2 +
      \kappa_0 \|\nu\|^2\\
      & \quad \quad+ \big(\mu\|\nabla u\|^2 + \zeta\|\nabla
      \nu\|^2\big) +\frac{}{}\big(\xi\|\div u\|^2 +
      \gamma\|\div \nu\|^2\big) \\
      & \quad \,\,\, \leq 2\kappa \|\nabla u \|\, \|\nabla \nu\|
      +2\bar{\xi} \|\div u\|\, \|\div \nu\| + \bar c,
    \end{aligned}
  \end{equation*}
  where $\bar c = \bar c(\|u_0\|_{1,2}, \|\dot u_0\|, \|\nu_0\|_{1,2},
  \rho, \kappa_0, \mu, \zeta, \xi, \bar{\xi},\gamma)$. Then, by using
  Young's inequality and rearranging the terms in the expression above, we obtain
  \begin{equation}\label{eq:stima-spaziale}
    \rho\| u_t \|^2 + 2\varsigma \int_0^t\|\nu_t\|^2 +
    \kappa_0 \|\nu\|^2
    + \frac{1}{2}\big(\mu\|\nabla u\|^2 +
    \zeta\|\nabla \nu\|^2\big) \leq \bar{c},
  \end{equation}
  which implies the inclusions $ u_t, \nabla u \in L^\infty(0, T; L^2(\mathcal{B}))$,
  $ \nu_t\in L^2(0, T; L^2(\mathcal{B}))$ and $\nu, \nabla \nu\in
  L^\infty(0, T; L^2(\mathcal{B}))$.

  \noindent By the first equation in \eqref{eq:quasicrystal}, we also have 
 
\begin{equation*} %\label{eq:stima-tempo}
      \begin{aligned}
        \rho |\langle u_t (\tau) - u_t(s), \phi \rangle|\leq
        % \rho \int_s^\tau |\langle u_t, \phi_t\rangle| +
        \mu \int_s^\tau & |\langle \Delta u, \phi \rangle
        + \xi \int_s^\tau |\langle\nabla \div u, \phi \rangle \\
        &+ \kappa\int_s^\tau |\langle\Delta \nu, \phi\rangle| + \bar
        \xi \int_s^\tau|\langle \nabla \div \nu, \phi \rangle |
      \end{aligned}
    \end{equation*}
 for all $\phi\in \mathcal{H}^1$ and $0\leq s\leq \tau \leq T$. By the
  boundedness of $\nabla u$ and $\nabla \nu$, which belong to $ L^\infty (0, T;
  L^2(\mathcal{B}))$, we realize that $u_t(\tau) - u_t(s)$ is bounded in
  $L^2(0, T; \mathcal{H}^{-1})$.

  \noindent Recalling that $(u, \nu)$ is actually the sequence $(u^m, \nu^m)$
  (and that $(u_t, \nu_t)$ indicates $(u^m_t, \nu^m_t)$), by using
  classical compactness arguments, we can extract a sub-sequence
  (still labeled by $(u^m, \nu^m)$) such that
  \begin{align*}
    & u^m\to \hat u \textrm{ in } \left\{\begin{array}{l} L^2(0, T;
        L^2(\mathcal{B}))
        \text{--}\textrm{strong}, \\
        L^{\infty}(0, T; W^{1,2}(\mathcal{B})) \text{--}\textrm{weak}^{\star},\\
        L^2(0, T; W^{1,2}(\mathcal{B})) \text{--}\textrm{weak},
      \end{array} \right.
    \\
    & u^m_t\to \hat{u}_t\textrm{ in }
    \left\{\begin{array}{l} L^\infty(0, T; L^2(\mathcal{B}))
        \text{--}\textrm{weak}^\star, \\
        L^{2}(0, T; L^2(\mathcal{B})) \text{--}\textrm{weak},
      \end{array} \right. 
    \\
    & u^m_{tt}\to \hat{u}_{tt}\textrm{ in } L^{2}(0, T; \mathcal{H}^{-1})
    \text{--}\textrm{weak}, 
    \intertext{and}
    &\nu^m\to \hat  \nu\textrm{ in }
    \left\{\begin{array}{l} 
        L^{\infty}(0, T; L^2(\mathcal{B})) \text{--}\textrm{weak}^{\star},\\
        L^2(0, T; W^{1,2}(\mathcal{B})) \text{--}\textrm{weak},\\
      \end{array} \right.
    \\
    & \nu^m_t \to \hat \nu_t \textrm{ in } L^2(0, T; L^2(\mathcal{B}))
    \text{--}\textrm{weak}.
  \end{align*}
  \noindent By exploiting these convergences, we can easily pass to the limit for
  the sequence $(u^m, \nu^m)$ in the weak formulation
  \eqref{eq:weak-form-1}--\eqref{eq:weak-form-2}, proving that $(\hat
  u, \hat \nu)$ is a regular weak solution to the problem
  \eqref{eq:quasicrystal}.  The continuity property of such a solution
  follows from the standard embedding of $W^{1,2} (0, T ;
  L^2(\mathcal{B}))$ in $C^\beta ([0,T]; L^2(\mathcal{B}))$ of $\beta$--H\"older
  continuous functions on $[0,T]$ with values in $L^2(\mathcal{B})$, for
  every $\beta\in (0, 1)$ (see, e.g., \cite{Adams:1975,
    Triebel:1978}). Again, the circumstance that
  $\hat u$ and $\hat u_t$ are weakly continuous with values in $\mathcal{H}^1$,
  and $L^2(\mathcal{B})$ respectively, is a direct  consequence 
of the obtained regularity, i.e. $\hat u \in  L^\infty(0,T; \mathcal{H}^1)$, $u_t \in L^2(0,T;
L^2(\mathcal{B}))$, $u_{tt}\in L^2(0,T; \mathcal{H}^{-1})$, and
the Sobolev embedding theorem.

  \noindent Uniqueness emerges from direct computations: Let $(u_1,
  \nu_1)$ and $(u_2,\nu_2)$ be two solutions of
  \eqref{eq:quasicrystal}. We take differences $U:= u_1- u_2$ and
  $V:= \nu_1-\nu_2$ and consider the related equations.  We use
  $U_t$ and $V_t$ as test functions for the equations satisfied by $U$
  and $V$, respectively. Thus, by taking the $L^2$-inner products and 
integrating in time on $(0, T)$, as in the procedure above, we
  obtain
  \begin{equation*}
    \begin{aligned}
      \rho&\| U_t \|^2 + 2\varsigma \int_0^t\|V_t\|^2 +
      \kappa_0 \|V\|^2\\
      & \quad \quad+ \big(\mu\|\nabla U\|^2 + \zeta\|\nabla V\|^2\big)
      +\frac{}{}\big(\xi\|\div U\|^2 +
      \gamma\|\div V\|^2\big) \\
      & \quad \,\,\, \leq 2\kappa \|\nabla U \|\, \|\nabla V\|
      +2\bar{\xi} \|\div U\|\, \|\div V\|,\,\, t\in (0, T),
    \end{aligned}
  \end{equation*}
  from which the conclusion follows by applying Young's inequality on
  the right-hand side terms and reabsorbing the emerging integrals.
\end{proof}

\begin{remark} \label{more-reg}
Let $(u, \nu)$ be a weak solution to \eqref{eq:quasicrystal} constructed in
Theorem~\ref{thm:linear-case}. By using standard arguments it is 
possible to show that, actually, it is such that 
$u\in C(0, T; \mathcal{H}^1)$ and $u_t\in C(0, T; L^2(\mathcal{B}))$. 
Indeed, this may be proved by taking a regularization (convolution in
time) $(u^\epsilon, \nu^\delta)$, of $(u, v)$, given by
\begin{equation*} 
u^\epsilon = \eta_\epsilon \ast  u   \in  C_0^\infty(0, T ;
\mathcal{H}^1) \textrm{ and } \nu^\delta = \eta_\delta \ast  \nu   \in  C_0^\infty(0, T ;
\mathcal{H}^1),
\end{equation*}
where the smooth function $\eta_\epsilon$ is even,
positive, supported in $(-\epsilon, \epsilon)$ and so is $\eta_\delta$ in $(-\delta,
\delta)$, with
$\int_{-\epsilon}^\epsilon  \eta_\epsilon (s) ds = 1$ ( similarly
$\int_{-\delta}^\delta  \eta_\delta (s) ds = 1$),
and using subsequently the properties of the considered system of equations along with
the convergence  $u^\epsilon\to u$ ($\nu^\delta\to \nu$)  in $L^2(0, T; \mathcal{H}^1)$ as
$\epsilon\to 0$  (as $\delta\to 0$, respectively).\\
\noindent Alternatively, as we do in analyzing the non-linear system \eqref{eq:quasicrystal-0}
below, we can use a parabolic regularization of the equations in
\eqref{eq:quasicrystal} by consider viscous components of the standard and phason stresses determining the terms $-\epsilon\Delta u_t$ and 
$-\delta\Delta \nu_t$, which appear respectively to the right-hand side of the first and second
equation in \eqref{eq:quasicrystal}. In this case, to prove the
strong continuity of the weak solution $(u, \nu)$, we can exploit the
convergence of the regularized solution $(u^\epsilon, \nu^\nu)$ to
$(u,\nu)$, as $(\epsilon, \delta)\to (0,0)$, together with the intrinsic properties of the equations
\eqref{eq:quasicrystal} (see the next section for additional details).
\end{remark}

\section{Existence results: dynamics with phason diffusion 
and non-linear gyroscopic phason inertia}
In presence of gyroscopic-type phason inertia, the system (\ref{eq:quasicrystal}) becomes
\begin{equation} \label{eq:quasicrystal-0} %\left\{ \begin{array}{ll}
    \begin{aligned}
      &\mu \Delta u + \xi \nabla \div u + \kappa \Delta \nu + \bar \xi
      \nabla \div \nu = \rho u_{tt}
      &\textrm{in } \mathcal{B}_T,\\[0.1 cm]
      &\zeta \Delta \nu + \gamma \nabla \div \nu + \kappa \Delta u
      +\bar \xi \nabla \div u -\kappa_0 \nu= \varsigma \nu_t + \ell
      (\curl u_t)\X\nu_t
      & \textrm{in } \mathcal{B}_T,\\[0.1 cm]
      & u(t, x) = \bar u(x),\,\, \nu(t, x) = \bar \nu (x), & \textrm{on } \D\mathcal{B}_T, \\[0.1 cm]
      &u|_{t=0}=u_0, \,\, u_t|_{t=0}=\dot u_0,\,\, \nu|_{t=0}=\nu_0, &
      \textrm{on } \mathcal{B},
    \end{aligned}
      %    \end{array}\right. \hspace{-0.65 cm}
\end{equation}
$\ell$ is a positive constant and $\ell ((\curl u_t)\X\nu_t)_{i}:=\ell
\mathrm{e}_{ijr} \mathrm{e}_{rhk}u_{tk|h}\nu_{tj}$, leaving understood the sum over repeated indexes, as usual. $\mathrm{e}_{rhk}$ is the $rhk$-th component of the Ricci alternating symbol $\mathrm{e}$, recalled in Section 2.

\begin{defin}[Weak solution] \label{def:weak-sol} We say that a pair
  $(u, \nu)$ is a ``weak solution'' to the system
  \eqref{eq:quasicrystal-1} if, for a given $T>0$, the following conditions hold true: \vspace{-0.3 cm}
  \begin{align}
    \intertext{Regularity:} & u \in L^\infty(0, T ; \mathcal{H}^1)
    \cap C([0, T ]; \mathcal{H}^1),\,\, \nu \in L^2(0, T ;
    \mathcal{H}^1) \cap C([0, T ];
                                  \mathcal{H}^1), \label{eq:u-nu}\\[2 mm]
    & \begin{aligned} \label{eq:Dt-u-nu}  & u_t \in
      C([0, T] ; L^2(\mathcal{B})) \cap L^2(0, T ;
      W^{1,2}(\mathcal{B})), \,\,
      u_{tt} \in L^2(0, T ; \mathcal{H}^{-1}),\\
      & \nu_t \in L^2(0, T ; W^{1,2}(\mathcal{B})).
    \end{aligned}\\
    \intertext{Weak formulation: For all $(w, h)\in
   \textcolor{black}{ C_0^\infty([0,T[\X \mathcal{B})
\X C_0^\infty([0,T[\X \mathcal{B})}$,}
    & \begin{aligned} \label{eq:weak-form-1a} \rho &\int_0^T
      \int_{\mathcal{B}} u_{tt}\, \cdot w + \mu \int_0^T \int_{\mathcal{B}}
      \nabla u \cdot \nabla w + \kappa \int_0^T \int_{\mathcal{B}} \nabla
      \nu \cdot \nabla w
      \\
      &= \int_0^T \int_{\D\mathcal{B}} w\cdot\big(\mu\frac{\D u}{\D n} +
      \kappa\frac{\D \nu}{\D n}\big) + \xi \int_0^T \int_{\mathcal{B}}
      \nabla (\div u) \cdot w + \bar \xi \int_0^T \int_{\mathcal{B}} \nabla
      (\div \nu) \cdot w,
    \end{aligned}\!\!\!\!\!\\[2 mm]
    & \begin{aligned} \label{eq:weak-form-2a} \int_0^T & \int_{\mathcal{B}}
      (\varsigma \nu_t + \kappa_0\nu)\cdot h + \ell \int_0^T
      \int_{\mathcal{B}} (\curl u_t)\X\nu_t\cdot h + \int_0^T
      \int_{\mathcal{B}} (\zeta \nabla \nu + \kappa \nabla u )  \cdot \nabla h \\
      % \kappa \int_0^T \int_{\mathcal{B}} \nabla u \cdot \nabla h \\
      = & \int_0^T \int_{\D\mathcal{B}} h \cdot \big(\kappa\frac{\D u}{\D
        n} + \zeta\frac{\D \nu}{\D n}\big) + \gamma \int_0^T
      \int_{\mathcal{B}} \nabla (\div \nu) \cdot h + \bar \xi \int_0^T
      \int_{\mathcal{B}} \nabla (\div u) \cdot h. \!\!\!\!\!
    \end{aligned}\!\!\!\!\!
  \end{align}
\end{defin}

\noindent To determine existence of a weak solution to \eqref{eq:quasicrystal-0},
we analyze first its regularized counterpart, obtained by introducing dissipative
 components of the stresses, fixing the parameters $\epsilon > 0$ and $\delta>0$:
\begin{equation} \label{eq:quasicrystal-1} %\left\{ \begin{array}{ll}
\begin{aligned}
      &\mu \Delta u + \xi \nabla \div u + \kappa \Delta \nu + \bar \xi
      \nabla \div \nu = -\epsilon\Delta u_t + \rho u_{tt}
      &\ \textrm{in } \mathcal{B}_T,\\[0.1 cm]
      &\begin{aligned}
        \zeta \Delta \nu + \gamma \nabla \div \nu + \kappa \Delta u
        +\bar \xi \nabla \div u -\kappa_0 \nu= - &\delta\Delta
        \nu_t
        + \varsigma \nu_t\\
        &+ \ell (\curl u_t)\X\nu_t
      \end{aligned}
      & \textrm{in } \mathcal{B}_T,\\[0.1 cm]
      &u(t, x) = \bar u(x),\,\, \nu(t, x) = \bar \nu (x),
      & \textrm{on } \D\mathcal{B}_T, \\[0.1 cm]
      &u|_{t=0}=u_0, \,\, u_t|_{t=0}=\dot u_0,\,\, \nu|_{t=0}=\nu_0, &
      \textrm{on } \mathcal{B}.
    \end{aligned}
    % \end{array}\right. \hspace{-0.65 cm}
\end{equation}

\begin{theorem}
  Consider problem \eqref{eq:quasicrystal-0}.  Assume $\mu >  -\lambda$,
  $\kappa >0$, $\bar \xi >0$, %$\gamma>0$, 
  $ \mu, \zeta > 2\kappa$, and 
  $\xi, \gamma > 2\bar{\xi}$.  Assume also
  $u_0, \nu_0\in W^{1,2}(\mathcal{B})$, $\dot u_0\in W^{1,2}(\mathcal{B})$,
    such that $\ell\|\dot u_0\|_{1,2} < \varsigma/2$
  % so that $\nabla u(0, x)=
  % \nabla u_0$ and $\nabla \nu (0, x) = \nabla \nu_0 (x)$ on
  % $\mathcal{B}$.
  and that $\bar u, \bar \nu \in L^2(\D\mathcal{B})$.  Then, the system
  \eqref{eq:quasicrystal-0} admits a weak solution.
\end{theorem}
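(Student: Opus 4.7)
The plan is the two-step strategy already indicated in Remark~\ref{more-reg}: first establish existence for the viscous-regularized system \eqref{eq:quasicrystal-1} at fixed $\epsilon,\delta>0$ via a Galerkin scheme modelled on the one used in Theorem~\ref{thm:linear-case}, and then pass $\epsilon,\delta\to 0$ by means of uniform a priori estimates. The viscous terms $-\epsilon\Delta u_t$, $-\delta\Delta\nu_t$ are introduced precisely because they supply the $L^{2}(0,T;\mathcal{H}^{1})$ control of the rate fields that is required to give meaning to the gyroscopic product $\ell(\curl u_t)\X\nu_t$ against the test functions appearing in the weak formulation.

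At the Galerkin level I would test the finite-dimensional analogue of the first equation against $u^{m}_t$ and of the second against $\nu^{m}_t$, and add. The key algebraic observation is that the gyroscopic contribution drops out identically, since pointwise
\begin{equation*}
\bigl((\curl u^{m}_t)\X\nu^{m}_t\bigr)\cdot\nu^{m}_t\equiv 0
\end{equation*}
by antisymmetry of the cross product. What remains is exactly the energy identity already manipulated in Theorem~\ref{thm:linear-case}, enriched by the dissipative gains $\epsilon\|\nabla u^{m}_t\|^{2}_{L^{2}(L^{2})}$ and $\delta\|\nabla\nu^{m}_t\|^{2}_{L^{2}(L^{2})}$. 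Under the hypotheses $\mu,\zeta>2\kappa$ and $\xi,\gamma>2\bar\xi$, the cross-stress terms are absorbed via Young's inequality, producing bounds uniform in $m$ but depending on $\epsilon,\delta$. Solvability of the corresponding ODE system on $X_m\X Y_m$ is standard, because the only nonlinearity is quadratic, hence locally Lipschitz on the finite-dimensional subspace, and the energy bounds extend the approximate solution to all of $[0,T]$.

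Passing $m\to\infty$ at fixed $\epsilon,\delta$ is the usual compactness package: weak and weak-$\star$ convergences handle all linear terms, while for the gyroscopic one I would combine the weak convergence $\curl u^{m}_t\rightharpoonup\curl u_t$ in $L^{2}(\mathcal{B}_T)$, supplied by the $\epsilon$-bound, with the strong convergence $\nu^{m}_t\to\nu_t$ in $L^{2}(\mathcal{B}_T)$ produced by Aubin--Lions starting from $\nu^{m}_t\in L^{2}(\mathcal{H}^{1})$ and $\nu^{m}_{tt}\in L^{2}(\mathcal{H}^{-1})$ (the latter read off the second equation). This suffices to pass to the limit in $\int_0^T\int_{\mathcal{B}}\ell(\curl u^{m}_t)\X\nu^{m}_t\cdot h$ against any smooth test field $h$, delivering a weak solution of \eqref{eq:quasicrystal-1} in the regularity classes \eqref{eq:u-nu}--\eqref{eq:Dt-u-nu}, with constants depending on $(\epsilon,\delta)$.

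The genuine difficulty is the final limit $\epsilon,\delta\to 0$, which demands $(\epsilon,\delta)$-independent estimates. The basic energy identity still delivers uniformly $u\in L^{\infty}(\mathcal{H}^{1})$, $u_t\in L^{\infty}(L^{2})$, $\nu\in L^{\infty}(\mathcal{H}^{1})$ and $\nu_t\in L^{2}(L^{2})$, since the gyroscopic term is still invisible to it; what it does not see is the $L^{2}(0,T;\mathcal{H}^{1})$ control of the rate fields that is needed to make sense of the limiting nonlinear term. To recover it uniformly in $(\epsilon,\delta)$ I would derive a complementary identity obtained by testing the first equation against $-\Delta u_t$ and the second against $-\Delta\nu_t$; in this identity the gyroscopic cross-term does not vanish, and it is bounded by $\ell\|u_t\|_{1,2}\|\nu_t\|\,\|\nabla\nu_t\|$ after H\"older and Sobolev embedding. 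The smallness hypothesis $\ell\|\dot u_0\|_{1,2}<\varsigma/2$ is precisely what allows one to absorb such a contribution into a fraction of the $\varsigma$-dissipation and to set up a continuation-in-time argument that propagates the bound $\ell\|u_t(\cdot,t)\|_{1,2}<\varsigma/2$. Once the uniform estimates are in hand, the passage $\epsilon,\delta\to 0$ proceeds exactly as in the Galerkin step, and the strong continuity of $u$, $u_t$ and $\nu$ in time follows from the obtained regularity together with the classical Sobolev-in-time embeddings already invoked in Theorem~\ref{thm:linear-case}.
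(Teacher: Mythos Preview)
Your overall two–step strategy (viscous regularization, Galerkin, then $\epsilon,\delta\to0$) and the observation that the gyroscopic term vanishes against $\nu_t$ match the paper. Two substantive points, however, diverge from the paper's argument and, as written, contain gaps.

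\textbf{Compactness in the Galerkin limit.} You propose to pass to the limit in $\int(\curl u^m_t)\times\nu^m_t\cdot h$ by combining weak convergence of $\curl u^m_t$ with \emph{strong} convergence of $\nu^m_t$ in $L^2(\mathcal{B}_T)$, the latter obtained via Aubin--Lions from $\nu^m_t\in L^2(\mathcal{H}^1)$ and ``$\nu^m_{tt}\in L^2(\mathcal{H}^{-1})$ read off the second equation''. But the second equation is first order in time for $\nu$: it contains $\nu_t$, not $\nu_{tt}$, so no bound on $\nu^m_{tt}$ is directly available; differentiating in time reintroduces $\nu^m_{tt}$ inside the gyroscopic term and becomes circular. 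The paper reverses the roles: the \emph{first} equation does contain $u_{tt}$, so (together with the $\epsilon$–regularization giving $u^{\epsilon,m}_t\in L^2(\mathcal{H}^1)$) Aubin--Lions yields $u^{\epsilon,m}_t\to u^\epsilon_t$ \emph{strongly} in $L^2(\mathcal{B}_T)$, while $\nu^{\delta,m}_t$ is only used weakly. The passage to the limit is then carried out by writing, for smooth $\phi$,
\[
\int_0^t\!\int_{\mathcal{B}}(\curl u^{\epsilon,m}_t)\times\nu^{\delta,m}_t\cdot\phi
=\int_0^t\!\int_{\mathcal{B}}\curl(\nu^{\delta,m}_t\times\phi)\cdot u^{\epsilon,m}_t
+\int_0^t\!\int_{\mathcal{B}}\phi\times(\curl u^\epsilon_t)\cdot\nu^{\delta,m}_t + \text{remainder},
\]
so that the first piece uses the strong convergence of $u^{\epsilon,m}_t$ (with $\curl(\nu^{\delta,m}_t\times\phi)$ bounded in $L^2$ thanks to $\nu^{\delta,m}_t\in L^2(\mathcal{H}^1)$), and the second uses only the weak convergence of $\nu^{\delta,m}_t$.

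\textbf{Role of the smallness hypothesis.} You invoke $\ell\|\dot u_0\|_{1,2}<\varsigma/2$ through a higher–order identity (testing against $-\Delta u_t$, $-\Delta\nu_t$) and a continuation argument propagating $\ell\|u_t(t)\|_{1,2}<\varsigma/2$. The bound you quote, $\ell\|u_t\|_{1,2}\|\nu_t\|\,\|\nabla\nu_t\|$, does not obviously follow from that testing: after integration by parts the gyroscopic contribution still carries either $\nabla\curl u_t$ or $\Delta\nu_t$, neither of which is uniformly controlled. The paper uses the smallness condition in a much more localized way: it is applied \emph{only at $t=0$}, by pairing the second equation (evaluated at $t=0$) against an arbitrary $\varphi\in\mathcal{H}^1$ to bound $\|\nu^{\delta,m}_t(0)\|_{1,2}$; the term $\ell\|\curl\dot u^m_0\times\dot\nu^m_0\|_{-1,2}\le\ell\|\dot u_0\|_{1,2}\|\dot\nu^m_0\|$ is then absorbed into the left–hand side precisely because $\ell\|\dot u_0\|_{1,2}<\varsigma/2$. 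No continuation argument is needed.
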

\begin{proof}
  First we consider the regularized model~\eqref{eq:quasicrystal-1}.
  In order to prove the existence of pertinent weak solutions $( u^\epsilon,
  \nu^\delta)$, we follow the same path leading to Theorem~\ref{thm:linear-case}. We
  apply %$(u^\epsilon, \nu^\delta)$ as a shorthand notation
  % to indicate
  the Galerkin method by using the approximating functions
  $(u^{\epsilon, m}, \nu^{\delta, m})$; thus we proceed by testing
  the equations in \eqref{eq:quasicrystal-1} by
  $u^{\epsilon, m}_t$ and $\nu^{\delta, m}_t$, respectively.

  \noindent Due to the identity
  \begin{equation*}
    \int_\mathcal{B} (\curl u_t^{\epsilon, m})\X \nu^{\delta, m}_t \cdot
    \nu^{\delta, m}_t =0,
  \end{equation*}
  a priori estimates for the equations \eqref{eq:quasicrystal-1} are nearly the same made
  for the system \eqref{eq:quasicrystal}.

 \noindent Thus, in the case of the system \eqref{eq:quasicrystal-1},
  the following estimate holds true, provided that $\|(\nabla
  u_t^{\epsilon, m} )(0)\|$ and $\|(\nabla \nu_t^{\delta, m} )(0)\|$
  are bounded:
  \begin{equation}\label{eq:stima-spaziale-1}
    \begin{aligned}
      \rho\| u_t^{\epsilon, m} \|^2 + 2\varsigma
      \int_0^t\|\nu_t^{\delta, m}\|^2 + \kappa_0 \|\nu^{\delta, m}\|^2
      & + \int_0^t(\epsilon \|\nabla u^{\epsilon, m}_t\|^2 +
      \delta
      \|\nabla \nu^{\delta, m}_t\|^2) \\
      & + \frac{1}{2}\big(\mu\|\nabla u^{\delta, m}\|^2 +
      \zeta\|\nabla \nu^{\delta, m}\|^2\big) \leq \bar{c}.
    \end{aligned}
  \end{equation}
  Consequently, we get an improved regularity: $\nabla u_t\in L^2(0,
  T; L^2(\mathcal{B}))$ and $\nabla \nu_t\in L^2(0, T;
  L^2(\mathcal{B}))$. Here, we have $(\nabla
      u_t^{\epsilon, m} )(0)= P_m \nabla \dot u_{0}$ and the
  constant in the inequality is $\bar c = \bar c(\|u_0\|_{1,2}, \|\dot u_0\|,
  \|\nu_0\|_{1,2}, \|\nabla \dot u_{0}\|, \| (\nabla
  \nu_t^{\delta, m} )(0)\|, \epsilon, \delta, \rho, \kappa_0, \mu,
  \zeta, \xi, \bar{\xi},\gamma)$.

  \noindent In order to guarantee the validity of the estimate
  \eqref{eq:stima-spaziale-1} we have to ensure uniform a priori
  estimates for the initial datum $(\nabla \nu_t^{\delta, m} )(0)$.

  \noindent To bound $\|\nabla (\nu_t^{\delta, m} )(0)\|$, let
$\varphi \in \mathcal{H}^1$ with $\|\varphi\|_{1,2} \leq 1$.
  From the second equation in \eqref{eq:quasicrystal-1} we get
    \begin{align*}
      \varsigma|\langle \nu_t^{\delta, m} (0),   \varphi\rangle| + & 
      \delta|\langle \Delta \nu_t^{\delta, m} (0),\varphi\rangle| \\
      & = \varsigma|\langle \dot{\nu}_0^m, \Pi_m\varphi\rangle| +
      \delta|\langle \nabla (\D_t \nu^{\delta, m})(0) ,\Pi_m \nabla \varphi\rangle|\\
      & \leq |\langle \bar \xi \div u^{ m}_0 + \kappa \nabla u^m_0 +
      \gamma \div \nu_0^m + \zeta \nabla \nu_0^m
      ,\Pi_m\nabla\varphi\rangle| \\
      &\quad + \kappa_0 |\langle \nu_0^m, \Pi_m \varphi \rangle| +
      \ell |\langle \curl \dot{u}_0^m \X \dot{\nu}_0^m, \Pi_m
        \varphi\rangle|\\
        &\leq \Big( c (\|u_0\|_{1,2} + \|\nu_0\|_{1,2})\|+
        \ell \| \curl \dot u_0^m \X \dot \nu_0^m\|_{-1,
          2}\Big) \|\varphi\|_{1,2}\\
        &\leq c (\|u_0\|_{1,2} + \|\nu_0\|_{1,2}) + \ell\|\dot
        u_0\|_{1,2}\|\dot \nu_0\|.
    \end{align*} 
Since $\ell \|\dot u_0\|_{1,2}< \varsigma/2$,  we obtain
  \begin{equation*}
    \min \left\{\frac{\varsigma}{2}, \delta\right\}\|\nu^{\delta,
      m}_{t}(0)\|_{1,2} %+ \delta\| \nabla (\D_t\nu^{\delta m})(0)     \| 
\leq  c (\|u_0\|_{1,2} + \|\nu_0\|_{1,2}),
  \end{equation*}
  that is $\|\nu^{\delta, m}_{t}(0)|_{1,2}\leq c (\|u_0\|_{1,2} +
    \|\nu_0\|_{1,2})$.  As a consequence, the bound \eqref{eq:stima-spaziale-1}
 depends just on $\|u_0\|_{1,2}, \|\dot
  u_0\|, \|\nu_0\|_{1,2}, \|\nabla \dot u_{0}\|, \epsilon, \delta,
  \rho, \kappa_0, \mu, \zeta, \xi, \bar{\xi}$, and $\gamma$.
  \smallskip

\noindent Consider the first equation in \eqref{eq:quasicrystal-1}. From it we get
\begin{equation} \label{eq:stima-tempo-1}
      \begin{aligned}
        \rho |\langle u^{\epsilon,m}_t (\tau) - u^{\epsilon,m}_t(s), \phi \rangle|\leq &
         \epsilon \int_s^\tau |\langle \nabla u^{\epsilon,m}_t,
        \nabla \phi_t\rangle| \\
        +&
        \mu\int_s^\tau  |\langle \nabla u^{\epsilon, m}, \nabla
        \phi \rangle|
         + \kappa\int_s^\tau |\langle\nabla \nu^{\delta, m}, \nabla \phi\rangle| \\
        +& \xi \int_s^\tau |\langle \div u^{\epsilon, m}, \div \phi \rangle |
       + \bar
        \xi \int_s^\tau|\langle  \div \nu^{\delta, m}, \div \phi \rangle |
      \end{aligned}\hspace{-0.7 cm}
    \end{equation}
for all $\phi\in \mathcal{H}^1$ and $0\leq s\leq \tau \leq T$. By the
  boundedness of $\nabla u$ and $\nabla \nu$, which belong to $L^\infty (0, T;
  L^2(\mathcal{B}))$, we find that $u_t^{\epsilon, m}(\tau) -
  u^{\epsilon, m}_t(s)$ is bounded in
  $L^2(0, T; \mathcal{H}^{-1})$.

  \noindent By exploiting the Aubin-Lions compactness argument, we obtain the
  same kind of convergences in the proof of
  Theorem~\ref{thm:linear-case} and, in addition, we realize that
  \begin{align}
    & u^{\epsilon, m}_t\to u^\epsilon_t\textrm{ in }
    \left\{ \begin{array}{l}
        L^2(0, T; W^{1,2}(\mathcal{B})) \text{--}\textrm{weak}, \\
       L^{2}([0, T]\X \mathcal{B}) \text{--}\textrm{strong},
      \end{array}\right. \label{eq:weak-u-t}
    \intertext{and} & \nu^{\epsilon, m}_t\to \nu^\delta_t\textrm{ in
    } 
\left\{ \begin{array}{l} 
          L^2(0, T; W^{1,2}(\mathcal{B}))
          \text{--}\textrm{weak},  \label{eq:weak-nu-t}\\
L^{2}([0, T]\X \mathcal{B}) \text{--}\textrm{weak}.
        \end{array} \right.
  \end{align}
As a consequence of the obtained regularity of $u_t^\epsilon$ and the interpolation theorem (see,
    e.g., \cite{Triebel:1978}), we find $ u_t^\epsilon \in C([0, T];
    L^2(\mathcal{B}))$. Moreover, due to the inclusion $W^{1,2}(0,T;
        \mathcal{H}^1)\subset C^\beta([0, T];
        \mathcal{H}^1)$, $\beta\in (0,1)$, we
        have in particular that $u^\epsilon, \nu^\delta \in C([0, T]; \mathcal{H}^1)$.

  \noindent To pass to the limit in the weak formulation, the only relevant
  point to be proved is the following: For every
  $\phi \in C_0^\infty([0,T[\X \mathcal{B})$, and $0\leq t \leq T$,
  the limit
  \begin{equation} \label{eq:limit} \left|\int_0^t\int_\mathcal{B}
      \big[(\curl u^{\epsilon, m}_t)\X \nu^{\delta, m}_t -
      (\curl u^{\epsilon}_t)\X  \nu^{\delta}_t \big]\cdot \phi\right| \to 0,
    \textrm{ as } m\to +\infty
  \end{equation}
  exists.
 In fact, we get
  \begin{equation}
    \begin{aligned}
      \int_0^t &\int_\mathcal{B} \big[(\curl u^{\epsilon, m}_t)\X
      \nu^{\epsilon, m}_t - (\curl  u^{\epsilon}_t)\X \nu^{\delta}_t
      \big]\cdot \phi \\
      =& \int_0^t\int_\mathcal{B} \nu_t^{\delta, m}\X \phi \cdot \curl
      (u^{\epsilon, m}_t - u^\epsilon_t) + \int_\mathcal{B} (
      \nu^{\delta, m}_t -\nu^{\delta}_t)\X \phi \cdot (\curl 
      u^{\epsilon}_t)\\
       =& \int_0^t\int_\mathcal{B} \curl (\nu_t^{\delta, m}\X \phi ) \cdot
      (u^{\epsilon, m}_t -  u^{\epsilon}_t)
      + \int_0^t\int_\mathcal{B} \phi \X  (\curl u^\epsilon_t) \cdot 
      ( \nu^{\delta, m}_t -\nu^{\delta}_t)\\
      =: & I_1^m + I_2^m.
    \end{aligned}
  \end{equation}

  \noindent Let us consider $I_1^m$. We have 
    \begin{equation*}
      \begin{aligned}
      \int_0^t\|\curl (\nu_t^{\delta, m}\X \phi ) \|^2 \! &=\! \int_0^t\!
       \| (\nu_t^{\delta, m} \cdot \nabla) \phi -  (\phi \cdot
      \nabla ) \nu_t^{\delta, m} + (\div \nu_t^{\delta, m} )\phi -(\div
      \phi) \nu_t^{\delta, m} \|^2\\
  %    & \leq c \int_0^t \big(\| \big)\\
      &\leq c \|\nu_t^{\delta, m}\|^2_{L^2 (0, T; \mathcal{H}^1)}
      \|\phi\|^2_{L^\infty(0, T; W^{1, \infty}(\mathcal{B}))}.
    \end{aligned}
  \end{equation*}
  Since $\nu^{\delta, m}_t$ and $\nu^{\delta}_t$ % weakly convergent to $\nu^{\delta}_t$
  are uniformly bounded (with respect to $m$ and $\delta$) in $L^2([0,
  T]\X \mathcal{B})$, in view of the inequality above, it follows that 
   $\curl (\nu_t^{\delta, m}\X \phi )$ is uniformly bounded in
  $L^2([0, T]\X\mathcal{B})$.

\noindent Whence, the inequalities
  \begin{equation*}
    \begin{aligned}
      \left| \int_0^t\int_\mathcal{B} \curl (\nu_t^{\delta, m}\X \phi ) 
        \cdot  (u^{\epsilon, m}_t - u_t)\right| \leq &
      \int_0^t \|\curl (\nu_t^{\delta, m}\X \phi )\| \|u^{\epsilon, m}_t
      - u^\epsilon_t \|\\
      \leq & c \|u^{\epsilon, m}_t
      - u^\epsilon_t \|_{L^2(0, T; L^2)}
    \end{aligned}
  \end{equation*}
  hold and, by the strong convergence of  $u^{\epsilon, m}_t$ to
  $u^{\epsilon}_t$, we compute  $I^m_1\to 0$ as $m\to
  +\infty$.

  \noindent As regards the integral $I_2^{m}$, for every $\phi\in
    L^{\infty}(0, T; \mathcal{H}^1)$ we find
     \begin{equation*}
      \begin{aligned}
        \int_0^t\|\phi\X (\curl u^\epsilon_t)\|^2\leq c
        \| u^\epsilon_t\|^2_{L^2 (0, T; \mathcal{H}^1)}
        \|\phi\|^2_{L^\infty(0, T; L^\infty(\mathcal{B}))},
      \end{aligned}
    \end{equation*}
  and hence $\phi\X (\curl u^\epsilon_t)\in L^2([0, T]\X
  \mathcal{B})$. Recalling that
    \begin{equation*}
       I_2^m = \int_0^t\int_\mathcal{B} \phi \X (\curl u^\epsilon_t) \cdot
       ( \nu^{\delta, m}_t
       -\nu^\delta_t)
     \end{equation*}
  and $\nu^{\delta, m}_t$ converges to $ \nu^{\delta}_t$ weakly in
  $L^2([0, T]\X\mathcal{B})$, it follows that
    $I_2^m\to 0$ as $m\to + \infty$. Hence,
    we can pass to the limit in \eqref{eq:limit}, obtaining the
    conclusion.  \smallskip

  \noindent By the same arguments used above (essentially, by exploiting again the
  inequalities \eqref{eq:stima-spaziale-1} and
  \eqref{eq:stima-tempo-1} for the weak solution $(
  u^\epsilon, \nu^\delta)$), we can deduce that $(
  u^\epsilon, \nu^\delta)$ is uniformly bounded in $W^{1,2}(0,
  T; \mathcal{H}^{1})$ and that $u^\epsilon_{tt}$ is bounded
  in $L^2(0, T; \mathcal{H}^{-1})$.  Hence, we can pass to the limit
  as $(\epsilon, \delta)\to (0, 0)$ and we have
  \begin{align*}
    & u^\epsilon_t\to u_t\textrm{ in }
    \left\{ \begin{array}{l} L^2(0, T; W^{1,2}) \text{--}\textrm{weak}, \\
        L^{2}([0, T]\X \mathcal{B})) \text{--}\textrm{strong},
      \end{array}\right.
    \intertext{and} & \nu^\delta_t\to \nu_t\textrm{ in } 
    \left\{ \begin{array}{l}  L^2(0,
    T; W^{1,2}) \text{--}\textrm{weak},\\
              L^{2}([0, T]\X \mathcal{B}))
                            \text{--}\textrm{weak}.
            \end{array}\right.
  \end{align*}
These convergences types are enough to pass to the limit  as
$(\epsilon, \delta)\to (0, 0)$ in the weak formulation for $(
  u^\epsilon, \nu^\delta)$, and hence
  the pair $( u,  \nu)$ is a weak solution of
  \eqref{eq:quasicrystal-1} for every $T\geq 0$.
\end{proof}

\begin{remark}
The lack of uniqueness for the weak solutions to
    \eqref{eq:quasicrystal-1} is mainly related to the need of
    having %the gradient of
         $\nu_t$ uniformly bounded in the $L^\infty(\mathcal{B})$-norm  in $(0, T)$. Such bound seems to be essential in estimating the difference of two possible
       solutions to \eqref{eq:quasicrystal-1}. Such requirement is a bit
       more than what is implied by the boundedness of the bulk actions
       required for proving the non-standard action-reaction principle
       satisfied by the phason tractions and the existence of the phason
       stress. The uniform boundedness of $\nu_t$ could be in principle
       reached with initial data more regular than those we have presumed
       here.

\end{remark}

\textbf{Acknowledgement}. This work has been developed within the programs
of the research group in Theoretical Mechanics of the \textquotedblleft
Centro di Ricerca Matematica Ennio De Giorgi\textquotedblright\ of the
Scuola Normale Superiore in Pisa. The support of GNAMPA-INDAM and GNFM-INDAM is
acknowledged.

\ \

\textbf{Conflict of interest:} The authors declare that they have no conflict of interest.

%%%%%%%%%%%%%%%%%%%%%%%%%%%%%%%%%%%%%%%%%%%%%
%%% The bibliography%%%
%%%%%%%%%%%%%%%%%%%%%%%%%%%%%%%%%%%%%%%%%%%%%

\end{document}